\newtheorem{definition}{Definition}
\newtheorem{lemma}{Lemma}
\newtheorem{theorem}{Theorem}
\newtheorem{proof}{Proof}
\newenvironment{packed_items}{
\begin{itemize}
  \setlength{\itemsep}{1pt}
  \setlength{\parskip}{0pt}
  \setlength{\parsep}{0pt}
}{\end{itemize}}
\begin{document}

\title{Trajectory Based Optimal Segment Computation in Road Network Databases}

\author{
{Xiaohui Li{\small $~^{\#1}$}, Vaida \v{C}eikut\.{e}{\small $~^{*2}$}, Christian S. Jensen{\small $~^{*3}$}, Kian-Lee Tan{\small $~^{\#4}$}}%
\vspace{1.6mm}\\
\begin{minipage}[b]{0.45\linewidth}
\centering
\fontsize{10}{10}\selectfont\itshape
$~^{\#}$School of Computing\\
        National University of Singapore
\end{minipage}
\begin{minipage}[b]{0.45\linewidth}
\centering
\fontsize{10}{10}\selectfont\rmfamily\itshape
$~^{*}$Department of Computer Science\\
        Aarhus University
\end{minipage}
\vspace{1.6mm}\\
\begin{minipage}[b]{0.45\linewidth}
\centering
\fontsize{9}{9}\selectfont\ttfamily\upshape
$~^{1, 4}$\{lixiaohui, tankl\}@comp.nus.edu.sg%
\end{minipage}
\begin{minipage}[b]{0.45\linewidth}
\centering
\fontsize{9}{9}\selectfont\ttfamily\upshape
$~^{2, 3}$\{ceikute, csj\}@cs.au.dk
\end{minipage}
}
\date{}
\maketitle

\begin{abstract}
  Finding a location for a new facility such that the facility
  attracts the maximal number of customers is a challenging
  problem. Existing studies either model customers as static sites and
  thus do not consider customer movement, or they focus on theoretical
  aspects and do not provide solutions that are shown empirically to
  be scalable.
  Given a road network, a set of existing facilities, and a collection
  of customer route traversals, an optimal segment query returns the
  optimal road network segment(s) for a new facility. We propose a
  practical framework for computing this query, where each route
  traversal is assigned a score that is distributed among the road
  segments covered by the route according to a score distribution
  model. The query returns the road segment(s) with the highest score.
  To achieve low latency, it is essential to prune the very large
  search space. We propose two algorithms that adopt different
  approaches to computing the query. Algorithm AUG uses graph
  augmentation, and ITE uses iterative road-network partitioning.
  Empirical studies with real data sets demonstrate that the
  algorithms are capable of offering high performance in realistic
  settings.
\end{abstract}

\section{Introduction}

The problem of finding a location for a new facility with respect to
given sets of customer locations and existing facilities, known as the
\emph{facility location}
problem~\cite{CDL05, DZX05, WOY09,  XYL10, ZDX06, ZWL10, FH09, NP05, CDL10}, has
applications in the strategic planning of resources (e.g., hospitals,
gas stations, banks, ATMs, billboards, and retail facilities) in both
the public and private sectors~~\cite{JOD07a, JOD07b}.
The literature contains a line of study that use the residences of
consumers as the customer locations~\cite{WOY09, XYL10, ZWL10}.
However, customers do not remain stationary at their residences, but
rather travel, e.g., to work. Consumers are not only attracted to
facilities according to the proximity of these to their residences.

Another line of study~\cite{AB96,B95,B97,BKX95,ABK09,BBL95} considers
the \emph{flow intercepting facility location} problem, where the goal
is to identify a location that intercepts the most flow from moving
customers.
Flows are made up by pre-planned customer trips, and the idea is that
customers can choose to interrupt their trip to receive a service from
a facility at a nearby location.
In its original formulation, the problem is to maximize the flow in a
network while placing $m$ new facilities while disregarding existing
facilities. Studies of this problem have a theoretical focus and do
not focus on providing scalable solutions. Thus, the largest study
considers spatial networks with up to 1,000 nodes~\cite{ABK09}. Real
spatial networks for even small regions are much larger. Another
difficulty is to obtain real flow data. This led to the development of
probabilistic methods~\cite{BKX95}.

The increasing availability of moving-object trajectory data, e.g., as
GPS traces, calls for a new study of the facility location problem
that takes into account the real movements of the customers that are
now available and that provides practical solutions that apply in
realistic settings.

We study the \emph{optimal segment} problem.  Given a road network
$G$, a set of facilities $F$, a set of route traversals $R$, each of which can be taken by
different users multiple times, the objective is to find the optimal
road segments such that a new facility on any of these segments
attracts the maximum number of route traversals. A route traversal is
\emph{attracted} by a facility if the distance between the route and
the facility is within a given threshold.

Figure~\ref{fig:ex_1} shows an instance of the problem.  Solid lines
and dots form the road network.  Hollow circles are existing
facilities ($f_1, f_2, f_3$, and $f_4$). Dashed lines indicate route
traversals ($r_1, r_2$, and $r_3$). We draw them next to the roads
for clarity.
The gray bar that covers $f_3$ indicates that $r_3$ is attracted by
$f_3$ because $f_3$ is within distance $\delta$ of one of the end
points of $r_3$.  The rationale is that a facility that is
sufficiently near a route will attract customers who follow the route.
Therefore, the ends of each route are extended by distance $\delta$.

\begin{figure}[!h]
  \centering
    \includegraphics[width=0.68\textwidth]{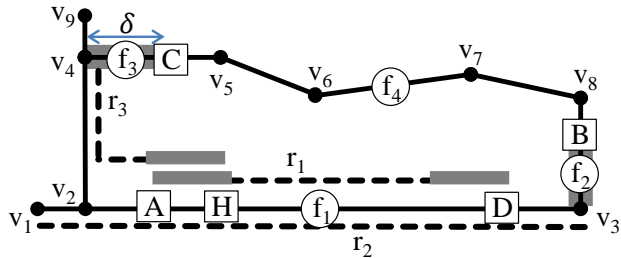}
\vspace{-0.5cm}
  \caption{Optimal Segment Problem Example}
  \label{fig:ex_1}
\end{figure}

With $\delta$, $r_1$ starts and ends at $A$ and $D$, respectively.
Route $r_2$ starts and ends at $v_1$ and $B$, respectively.  Route
$r_3$ starts and ends at $C$ and $H$, respectively.  Assume that each
of the routes is traversed by one customer exactly once. Intuitively,
the optimal segment for a new facility is the segment
$\mathit{\overline{AH}}$ because this segment attracts the most route
traversals (in this example, three).

We propose a framework to solve the optimal segment problem.  In the
framework, each route traversal is assigned a score, and that score is
distributed among the road network segments covered by the
traversal. The scoring of segments is based on three factors: the
number of customers who take the route (the count), the number of
traversals by each customer (the usage), and the length of the route.

Intuitively, road segments that are covered by many route traversals
and that are attracted by few existing facilities are good result
candidates.  But customers of different types of businesses can have
different spatial preferences with respect to the businesses they are
likely to visit. For example, customers may prefer grocery stores near
their homes or work places, but may have equal probability to visit
clothing stores along the routes they travel. To accommodate such
preferences, we support different functions for the assignment of
scores to the routes that customers follow as well as allow different
models for the distribution of scores to the underlying segments.

The framework encompasses two optimal segment algorithms. The first,
AUG, uses graph augmentation, the idea being to augment the set of
vertices of the original road network graph with the facilities and
the start and end points of the route traversals.  Each vertex in the
new graph records a list of attracted routes.  The score of an edge is
the sum of scores of the route traversals that cover both vertices of
an edge.  The edges with the highest score are mapped back to the
original graph and are possibly extended into longer segments.

The second, ITE, uses a heap to prioritize the most promising road
segments, and it iteratively partitions and scores these based on
intersecting routes. ITE keeps partitioning the road segments that
most likely contain an optimal subsegment until an optimal subsegment
is obtained.  Then it extends the partial optimal segment to its full
length and adds it to the result set.

In summary, the contribution is fourfold: 
\begin{packed_items}
\item Formalization of the new \emph{optimal segment} problem.
\item A framework that accommodates different scoring functions and score distribution models.
\item Two algorithms, AUG and ITE, that solve the problem.
\item Coverage of an empirical study that indicates that AUG and ITE are efficient in realistic settings.
\end{packed_items}

The remainder of the paper is structured as follows.
Section~\ref{sec:def} formalizes the problem setting.
Section~\ref{sec:preprocess} presents a preprocessing procedure that
is used by both of the two proposed algorithms.  We describe in detail
algorithm AUG and provide a theoretical analysis in
Section~\ref{sec:aug}.  We then describe in detail algorithm ITE and
give an accompanying theoretical analysis in Section~\ref{sec:ite}.
Section~\ref{sec:expt} reports the results of an empirical evaluation
of the proposed algorithms.  Section~\ref{sec:related} reviews
existing work. Finally, Section~\ref{sec:conclusion} concludes.

\section{Definitions}
\label{sec:def}

We proceed to model the road network and formulate the optimal segment problem along with supporting definitions.

\subsection{Road Network Modeling}

A road network is modeled as a \emph{spatially embedded graph} $G=(V, E)$,
where $V$ is a set of vertices, and $E$ is a set of edges that connect ordered pairs of vertices.
Every vertex $v_i$ has $(x_i, y_i)$ coordinates in 2D space, denoted as $\mathit{loc}(v_i)=(x_i , y_i)$.
We use either $e_{i,j}$ or $(v_i, v_j)$ to refer to the directed edge from vertices $v_i$ to $v_j$.
The length of an edge is defined as the Euclidean distance between its two vertices: $\|e_{i,j}\|=\|\mathit{loc}(v_i),\mathit{loc}(v_j)\|$.
Vertices and edges are assigned unique identifiers.
In this model, an edge between two vertices represents a part of a road.
The polyline obtained by connecting the vertices of consecutive edges approximates the center line of part of a road.

We use the term \emph{network point} to refer to a point location anywhere on an edge.

\begin{definition}(Network Point)
A road network point $p$ is defined as $p=(\mathit{eid}, d)$,
where $\mathit{eid}$ is the identifier of an edge $e=(v_i,v_j)$ and $d$ ($0 \leq d \leq 1$)
is the ratio of the distance between vertex $v_i$ and the point to the length of $e$.
$P$ denotes the set of all network points on the road network.
\end{definition}

It can be seen that given an edge $(v_i, v_j)$ identified by $\mathit{eid}$, $v_i=(\mathit{eid}, 0)$ and $v_j=(\mathit{eid}, 1)$.
Therefore we have $V \subset P$.
For example, in Figure~\ref{fig:ex_1}, the network point of $f_1$ is ${f_1}.p=(e_{2,3}, 0.5)$.
The distance between two network points $p_i$ and $p_j$ on the same edge $e$ is defined as $\mathit{dist}(p_i, p_j) = \|e\|\cdot |d_i - d_j|$.

A road segment is a polyline that starts at a network point, traverses a sequence of vertices, and ends at a network point.

\begin{definition}(Road Segment)
A road segment $s$ is defined as a sequence of network points,
$s = \langle p_1, p_2, \ldots, p_n \rangle$,
where $n \geq 2, \, p_1, p_n \in P, \, p_i \in V, p_1.\mathit{eid} = p_2.\mathit{eid},$ $p_{n-1}.\mathit{eid} = p_n.\mathit{eid}$
and $(p_i, \, p_{i+1}) \in E \, (1 < i < n-1)$.

The length of $s$ is the \emph{network distance} from $p_1$ to $p_n$.
\[
\mathrm{length}(s) =
\left\{
  \begin{array}{l l}
    \mathit{dist}(p_1, p_2) &  \text{$n = 2$}\\
    \mathit{dist}(p_1, p_2)+ \displaystyle \sum_{i=2}^{n-2} \| e_{i, {i+1}}\|  + \mathit{dist}(p_{n-1}, p_n) &  \text{$n > 2$}\\
  \end{array} \right.
\]

The set of road segments is denoted as $S$.
\end{definition}

It follows from definition that an edge is also a segment, i.e., $E \subset S$.
Further, we use the notion \emph{route} for a segment that a customer has traversed.

When there is no ambiguity from the context, we use $\mathit{\overline{AB}}$ to mean the segment between network points $A$ and $B$.
For example, the short segment between $A$ and $H$ in Figure~\ref{fig:ex_1} is $\mathit{\overline{AH}}$.
Otherwise, we write the segment in full, e.g., the road segment $\langle p_{f_3}, v_5, v_6, p_{f_4} \rangle$
between facilities $f_3$ and $f_4$,
supposing the network points for facilities $f_3$ and $f_4$ are $p_{f_3}$ and $p_{f_4}$, respectively.

\subsection{Facilities and Route Usage}
\label{subsec:routesAndFac}

A facility $f$ located at a network point $p$ is denoted as $(\mathit{fid}, p)$, where $\mathit{fid}$ identifies the facility.
$F$ denotes the set of all facilities.

A route is a segment and thus starts at a network point, traverses a sequence of connected edges, and stops at a network point.
The same route can be traversed many times by the same or many customers.
For instance, many customers who live in the same building may take the same route $r$ to the same grocery store.
We use $\mathit{count}$ to denote the number of customers who take $r$.
On the other hand, one customer can take the same route many times,
e.g., a customer may take the same route from home to work on most weekdays.
We use $\mathit{usage}_i$ to denote the number of times $r$ is taken by customer $i$
($1 \leq i \leq \mathit{count}$).

\begin{definition} (Route Usage Object)
A route usage object $\mathit{ro}$ is defined as
\[
\mathit{ro} = (\mathit{rid}, r,
    \mathit{count}, \langle \mathit{usage}_1, \ldots, \mathit{usage}_{\mathit{count}} \rangle)
\]
where $\mathit{rid}$ identifies the object, $r\in S$ is a segment traversed by the user. $R$ is a set of all route usage objects.
\end{definition}

A route $\mathit{ro}.r$ \emph{covers} a road segment $\mathit{s'}$ if $\forall p  \in \mathit{s'}(p \in \mathit{ro}.r)$.
A route $\mathit{ro}.r$ \emph{intersects} a segment $\mathit{s'}$ if
$\exists p \in \mathit{s'}(p \in \mathit{ro}.r)$.
The set of route usage objects whose routes cover $\mathit{s'}$ is denoted as $\mathit{s'}.C$.
The set of route usage objects whose routes intersect $\mathit{s'}$ is denoted as $\mathit{s'}.I$.
It is straightforward to see that $\mathit{s'}.C \subseteq \mathit{s'}.I$.
We also say that $\mathit{ro}_1 \equiv \mathit{ro}_2$ if $\mathit{ro}_1.r = \mathit{ro}_2.r$.

In Figure~\ref{fig:ex_1}, the routes $r_1, r_2$, and $r_3$ are
traversed by three different customers.
We assume that each route is traversed once by each customer.

A route $r$ is \emph{attracted} by a facility $f$ and $f$ is an \emph{attractor} for $r$
if $\mathit{dist}_G(f.p, \mathit{ro}.r) \leq \delta$, where
$\mathit{dist}_G(p, s)$ gives the shortest {network distance} between a network point $p$ and a segment $s$
and $\delta$ is the distance threshold that was introduced earlier.
Note that the same facility can attract several routes.
In Figure~\ref{fig:ex_1}, facility $f_1$ attracts routes $r_1$ and $r_2$, and $f_3$ attracts $r_3$.

\subsection{Scoring a Route}
\label{subsec:scoring}

In the optimal segment problem, route traversals play the role that customer locations play in the classical formulation of the optimal location problem.
Thus, we need to decide how to assign a score to a route based on the traversals of the route.
The scoring of a route is thus based on three factors that are all captured in the route usage object for the route:
the number of customers taking the route,
the number of traversals by each customer, and the length of the route.
The route's score is subsequently distributed among the segments covered by the route.
The intuition of distributing the score of a route to its segments is that when a customer traverses the route,
the customer may visit facilities located on segments along the route.

To ensure that the framework yields meaningful results, the scores eventually assigned to segments must be
invariant under the splitting and concatenation of route usage objects.
To achieve this, we require the following property to hold.

\textbf{Route Scoring Property} \quad
A route scoring function should be independent of the partitioning of the route of a route usage object.
Let $\mathit{ro}_1 \circ \mathit{ro}_2$ be the concatenation of $\mathit{ro}_1.r$ and $\mathit{ro}_2.r$.
Let $\mathit{ro}=(\mathit{id}, r_1 \circ r_2 \circ \cdots \circ r_m, \mathit{count}, u)$ and
$\mathit{ro}_i=(\mathit{id}_i, r_i, \mathit{count}, u) \, (1 \leq i \leq m)$.
Then we require
$
\mathit{score}(r) = \sum_{i=1}^{m}{\mathit{score}(r_i)}
$.

This property ensures that partitioning a route usage object does not change the total score that is available for assignment to segments.

Many scoring functions are possible that satisfy the property.
Next, we show two of them.

\begin{definition}(Scoring a Route)
Let a route $r$ with an associated route usage object $\mathit{ro}=(\mathit{rid}, r, \mathit{count}, \langle \mathit{usage}_1, \dots, \mathit{usage}_{\mathit{count}} \rangle)$ be given. Then the score of $r$ can be defined as follows
\begin{align*}
\mathit{score}_{\mathit{all}}(r) &= \mathrm{length}(r) \sum_{i=1}^{\mathit{count}} \, \mathit{ro.usage_i} \\
\mathit{score}_{\mathit{cap}-x}(r) &= \mathrm{length}(r) \sum_{i=1}^{\mathit{count}}{\min(\mathit{ro}.\mathit{usage}_i, x)}
\end{align*}
where $x$ is a user-defined value.
\end{definition}

Depending on the products or services offered by a facility, different scoring functions may be appropriate. For example, a facility that sells everyday necessities (e.g., a bakery) may attract the same customer on each route traversal by the customer. Thus, $\mathit{score}_{\mathit{all}}$ is appropriate. In contrast, if a store sells products that are bought less frequently (e.g., a furniture store), the store may not benefit from a large number of traversals by the same customer, making $\mathit{score}_{\mathit{cap}-x}$ more appropriate. Thus, we keep the framework open to the use of different scoring functions.

Unless specified otherwise, we use the function $\mathit{score}_{\mathit{all}}$ for illustration.

In Figure~\ref{fig:ex_1}, assuming that the lengths of routes $r_1, r_2$, and $r_3$
are $2$, $4$, and $3$, and the number of traversals per customer are $\langle 2, 2 \rangle$, $\langle 2, 1 \rangle$, and $\langle 2 \rangle$, respectively.
Then we have three route usage objects:
$\mathit{ro}_1=(\mathit{id}_1, r_1, 2, \langle 2,2 \rangle)$,
$\mathit{ro}_2=(\mathit{id}_2, r_2, 2, \langle 2,1 \rangle)$, and
$\mathit{ro}_3=(\mathit{id}_3, r_3, 1, \langle 2 \rangle)$.
The score of $r_1$ can be calculated as follows,
\(
\mathit{score}(r_1) = \mathrm{length}(r_1) \cdot (\mathit{ro}_1.\mathit{usage}_1+\mathit{ro}_2.\mathit{usage}_2) = 2 \cdot (2+2) = 8
\)
Similarly, we calculate the scores of $r_2$ and $r_3$, $\mathit{score}(r_2)= 12$ and $\mathit{score}(r_3) = 6$.

\subsection{Score Distribution Models}
\label{subsec:interest}

A score distribution model determines how to distribute the score of a route to the underlying segments.

Intuitively, segments covered by a route with many traversals that
are \emph{not attracted} by many other facilities are good candidates for placing a new facility.
Therefore, they should be assigned high scores.
But customers can have different spatial preferences for visiting different kinds of businesses.
Therefore, we leave the framework open to the use of different score distribution models.

When $n$ facilities are located on a segment, they partition the segment into $k$ subsegments
where $k$ is one of $n-1, n$, or $n+1$ depending on whether two, one, or no facilities are located at the ends of the segment.

The following are example score distribution models.

\begin{itemize}
\item Equal weight is assigned to each subsegment. In this model, the
  score of a route $r$ is distributed such that a customer has an
  equal probability to visit any business along the route. For
  example, any clothing store on the way back home.  The score
  assigned to the $i$th subsegment $s_i$ ($1 \leq i \leq k$) is
  $\frac{1}{k} \cdot \mathit{score(r)}$.

\item Decreasing/increasing weights are assigned to the subsegments.
  The score of the $i$th subsegment $s_i$ ($1 \leq i \leq k$) is given
  by $\frac{1}{2^i} \cdot \frac{1}{\sum_{i=1}^{k}{\frac{1}{2^i}}}
  \cdot \mathit{score}(r)$, ($1 \leq i \leq k$).  This definition
  gives exponentially decreasing scores to subsegments and normalizes
  the scores such that the full score of the route is distributed.
  This model indicates a preference for the facilities at the
  beginning of the route.  Symmetrically, there is a model that
  prefers the facilities at the end of the route.  For example, a
  customers might prefer to have a meal before the trip back home or
  to work, but it is also possible (with lower probability) that the
  customer will visit any restaurant along the route

\item All of the score is evenly distributed to the first and the last
  subsegment.  This model indicates that customers consider only
  businesses that are located nearest to the route destinations.  For
  example, a customer would like to visit the store, which sells dairy
  products, closest to home, but for regular items closest store to
  work place can be used.

\item The original \emph{facility location} problem considers simply
  the attraction of customer locations to facility locations.  In our
  setting, where customer route traversals are attracted to segments
  where facilities may be placed, the model that assigns the entire
  score of a route traversal to the route's first subsegment may be
  the one that most closely resembles the original problem.
\end{itemize}

In Figure~\ref{fig:ex_1}, route $r_2$ is attracted by facilities $f_1$ and $f_2$, and $k=3$.
In previous examples, we showed that the score of $r_2$ is 12.
According to the first proposed score distribution model, each subsegment ($\overline{v_1f_1}, \overline{f_1f_2}$, and $\overline{f_2B}$)
receives score $\frac{\mathit{score}(\mathit{r}_2)}{k}=\frac{12}{3} = 4$.
According to the second model,
$\frac{1}{\sum_{i=1}^{\mathit{k}} \, \frac{1}{2^i}} = \frac{1}{\frac{1}{2} + \frac{1}{4} + \frac{1}{8}}=\frac{1}{\frac{7}{8}}=\frac{8}{7}$,
$\mathit{score}(\overline{v_1f_1}) = \frac{1}{2} \cdot \frac{8}{7} \cdot 12 = \frac{4}{7} \cdot 12$,
$\mathit{score}(\overline{f_1f_2}) = \frac{1}{4} \cdot \frac{8}{7} \cdot 12 = \frac{2}{7} \cdot 12$,
and $\mathit{score}(\overline{f_2B}) = \frac{1}{8} \cdot \frac{8}{7} \cdot 12  = \frac{1}{7} \cdot 12$.

So far, we have distributed the score assigned to a single route to the segments covered by the route.
However, a segment $s$ may be covered by multiple routes that assign score to the segment.
The total score of the segment, $\mathit{score}_M(s)$, where $M$ indicates the score distribution model used, is simply the sum of these scores.
Similarly, we can calculate the score of a network location $p$, $\mathit{score}_M(p)$.

We show how to score the subsegment $\overline{AH}$ in Figure~\ref{fig:ex_1} using the first proposed model.
$\overline{AH}$ is covered by all of the three route usage objects.
So $\mathit{score}_M(\overline{AH})= \sum_{i=1}^3{\frac{\mathit{score}(\mathit{ro}_i.r)}{k_i}} = \frac{8}{2}+\frac{12}{3}+\frac{6}{2} = 11$.

Since our framework is generic w.r.t. score distribution models, unless specified otherwise, we use the first model for illustration.

\subsection{Problem Formulation}
With the above definitions in place, we can define the optimal segment query.

\begin{definition}(The Optimal Segment Query)
The optimal segment query finds every segment $s_{\mathit{opt}}$ from a road network $G$
such that
\begin{enumerate}
\item $\forall p_1, p_2 \in s_\mathit{opt} \, ( \mathit{score}_M(p_1) = \mathit{score}_M(p_2) )$
\item  $\forall p \in s_\mathit{opt} \, \forall p' \in P \, (\mathit{score}_M(p') \leq \mathit{score}_M(p))$
\item $\nexists s' \in S \, (s_\mathit{opt} \subset s'$ and 1 and 2 hold$)$.
\end{enumerate}
\end{definition}

This definition ensures that every point in the optimal segment has the same score, that the score is optimal, and that the optimal segment is maximal.

Notation used introduced this section and to be used throughout
the paper is summarized in Table~\ref{tbl:symbols}.

\begin{table}[th]
\centering
    \begin{tabular}{ | l | p{6cm} |}
    \hline
    $R$ & The set of route usage objects\\ \hline
    $F$ & The set of facilities\\ \hline
    $S$ & The set of road segments \\ \hline
    $P$ & The set of sites \\ \hline
    $G$, $G'$ & The (augmented) road network graph\\ \hline
    $V$, $V'$ & The set of vertices in $G$, $G'$\\ \hline
    $E$, $E'$ & The set of edges in $G$, $G'$\\ \hline
    $n$ & The total number of GPS points in $R$ \\ \hline
    $\delta$ & The maximum distance of attraction  \\ \hline
    \end{tabular}
    \caption{Summary of Notation}
  \vspace{-0.2cm}
  \label{tbl:symbols}
\end{table}

\section{Preprocessing}
\label{sec:preprocess}
A straightforward approach to compute the optimal segment query is to enumerate and score all possible segments and then return the one with the highest score.
However, this is not feasible as there is an infinite number of possible segments.
Thus, different approaches are needed.

The two algorithms we propose both rely on the same preprocessing algorithm, which we present here.
This algorithm determines the relationships between the facilities and the edges,
between the routes and the edges, and between the facilities and the routes.
It needs to be run only once for one set of routes.

The algorithm makes each edge record its facilities and
route start and end points, if any.
It also makes each vertex record the covering routes' identifiers.
The routes record the facilities they cover.
The facilities record the edge they are located on and the covering routes, if any.
Also the algorithm populates a lookup table so that given an edge, one can quickly determine
the routes that intersect with the edge.

Recall that $G$ is the spatially embedded graph, $f$ is a facility, and $r$ is a route.
Algorithm PreProcess calls getEdge($f, G$) to retrieve the edge where $f$ is located.
It also calls getEdges($r, G, \delta$) to retrieve the set of edges that intersect $r$.

The PreProcess procedure is presented in Algorithm~\ref{alg:PreProcess} and explained next.

A facility $f$ keeps the edge where it is located in the variable $f.e_c$, and the set of routes it attracts in $f.R_c$.
An edge $e$ keeps a set of route start and end network points that are located on $e$ in $e.O_c$.
This list is used by the AUG algorithm for augmentation purpose.
Each vertex $v$ of $e$ maintains a list of routes that it attracts in $v.R_c$,
and $v$'s relative positions in $v.L$.
These two lists are used later by AUG for scoring purpose.
$e.F_c$ and $e.R_c$ are the set of facilities that are located on $e$ and the set of routes that intersect edge $e$, respectively.
A route $r$ keeps its set of attracting facilities in $r.F_c$.

For each facility $f$, PreProcess retrieves its edge $e$
so that $e$ adds $f$ to its set of facilities $e.F_c$, and $f.e_c$ is set to the right edge (line~1).

\begin{algorithm}[ht]
\SetKwData{Left}{left} \SetKwData{This}{this} \SetKwData{Up}{up}
\SetKwFunction{Union}{Union}
\SetKwFunction{FindCompress}{FindCompress} \SetKwInOut{Input}{input}
\SetKwInOut{Output}{output}
\SetKwIF{If}{ElseIf}{Else}{if}{then}{else if}{else}{endif}
\SetKwFor{For}{for}{do}{endfor} \SetKwFor{While}{while}{do}{endw}
\SetKwFor{ForEach}{foreach}{do}{endfch}
\caption{PreProcess($G, R, F, \delta$) }
\label{alg:PreProcess}

\ForEach{$f \in F$}{
    $e \leftarrow \mathrm{getEdge}(f, G)$; 
    $e.F_c.\mathrm{add}(f)$; 
    $f.e_c \leftarrow e$;
}
\ForEach{$\mathit{ro} \in R $}{
    $r \leftarrow $ $\mathit{ro}.r$; \\
    $r.E_c \leftarrow \mathrm{getEdges}(r, G, \delta)$; \\
   \ForEach{$e=(v_s, v_e) \in r.E_c$} {
			\textbf{if} $(r.p_s.\mathit{eid} = e.\mathit{eid}) \wedge (r.p_s.d\not \in \{0, 1\})$ \textbf{then} $e.O_c.\mathrm{add}(r.p_s)$; \\
      	\textbf{if} $(r.p_e.\mathit{eid} = e.\mathit{eid}) \wedge (r.p_e.d\not \in \{0, 1\})$ \textbf{then} $e.O_c.\mathrm{add}(r.p_e)$; \\
        $e.R_c.\mathrm{add}(r)$; \\
        \If{$\mathrm{contains}(r, e)$}{
            $r.F_c \leftarrow r.F_c \cup e.F_c$; \\
            \textbf{foreach} $f \in e.F_c$ \textbf{do}  $f.R_c.\mathrm{add}(r)$; \\
        }
        \ElseIf{$\mathrm{intersects}(r, e)$}{
            $F' \leftarrow \{f| f\in e.F_c \wedge \mathrm{attracts}(f, r, \delta)\} $; \\
            $r.F_c \leftarrow r.F_c \cup F'$; \\
            \textbf{foreach} $f \in F'$ \textbf{do}  $f.R_c.\mathrm{add}(r)$; \\
        }
        \If{$\mathrm{attracts}(v_s, r, \delta)$}
        {
            $v_s.R_c.\mathrm{add}(r)$; 
            $i \leftarrow$ the position of $v_s$ relative to the $r.F_c$; 
            $v_s.L.\mathrm{add}(i)$;
        }
        \If{$\mathrm{attracts}(v_e, r, \delta)$}
        {
            $v_e.R_c.\mathrm{add}(r)$; 
             $i \leftarrow$ the position of $v_e$ relative to the $r.F_c$; 
            $v_e.L.\mathrm{add}(i)$;
        }
       }
}
\end{algorithm}

Next, for each route $r$, the set of intersected edges is retrieved,
and the $r.E_c$ field is updated (line~5).
Then, if the start network point of $r$ is not a vertex in $G$, it is added
to the $e.O_c$ set of the edge $e$ where it is located.
Similarly, $r$'s end network point is added to a $e.O_c$ set. (lines~7--8).
Route $r$ is also added to the list $e.R_c$ (line~9).
For each edge $e$ covered by the route $r$, the facilities and $r$ record each other (lines~10--12).
For each edge $e$ intersected by a route $r$, on the other hand, the attraction relationship between
the facilities and $r$ is determined before updating each other's corresponding field (lines~13--16).
Next, each vertex of $e$ records $r$ in the $v.R_c$ list if $r$ is attracted by it.
In addition, the relative position of $v$ is also kept in $v.L$ for scoring purpose (lines~17--20).

In Figure~\ref{fig:ex_1}, edge $e_{2,3}$ has $e_{2,3}.F_c=\{f_1\}$ and $f_1.e=e_{2,3}$.
Route $r_1$ traverses one edge and is attracted by one facility, so, $r_1.E_c=\{e_{2,3}\}$ and $r_1.F_c=\{f_1\}$. Edge $e_{2,3}$ is covered by $r_1, r_2$ and $r_3$, so, $e_{2,3}.R_c=\{r_1,r_2,r_3\}$.
Three start or end network points of the routes are located on edge $e_{2,3}$, so, $e_{2,3}.O_c=\{A,D,H\}$.
Vertex $v_2$ is covered by $r_2$ and $r_3$ and vertex $v_3$ is covered $r_2$, so $v_2.R_c=\{r_2, r_3\}$ and $v_3.R_c=\{r_2\}$.
Facility $f_1$ attracts $r_1$ and $r_2$, so $f_1.R_c=\{r_1, r_2\}$.

The following lemma states the time complexity of PreProcess.

\begin{lemma}
Algorithm PreProcess has time complexity $O(|F| + |R||E_m|)$,
where $|E_m|$ is the maximal number of edges that any route traverses.
\end{lemma}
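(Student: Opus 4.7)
The plan is to bound the work of the two top-level loops separately and then add them. For the first \texttt{foreach} over $F$ (lines 1--2), I would argue that each iteration performs a constant number of primitive operations: one call to \textrm{getEdge}, one insertion into $e.F_c$, and one pointer assignment to $f.e_c$. Assuming \textrm{getEdge} is supported by an index on the road network that returns the containing edge in $O(1)$ time (a standard assumption, since $f$ already stores its edge identifier via the network-point representation in the \emph{Network Point} definition), the first loop contributes $O(|F|)$ in total.

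For the second \texttt{foreach} over $R$ (lines 3--20), I would bound a single outer iteration and then multiply by $|R|$. Fix a route-usage object $\mathit{ro}$. The call \textrm{getEdges}$(r,G,\delta)$ returns the set $r.E_c$; by definition of $|E_m|$ we have $|r.E_c| \leq |E_m|$, and this call itself can be implemented to run in time proportional to its output size (e.g., by traversing the route edges once and expanding the endpoints by $\delta$). The inner loop over $e \in r.E_c$ then executes at most $|E_m|$ times. Each inner iteration performs only a constant number of predicate checks (\textrm{contains}, \textrm{intersects}, \textrm{attracts} on the two endpoints), set insertions into $e.O_c$, $e.R_c$, $r.F_c$, $f.R_c$, $v_s.R_c$, $v_s.L$, $v_e.R_c$, $v_e.L$, and the position computation for the vertices against $r.F_c$. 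Treating set insertions and the predicate evaluations on a single edge and its (bounded) facility list as $O(1)$ amortized work, each inner iteration is $O(1)$, so one route contributes $O(|E_m|)$ and the second loop contributes $O(|R||E_m|)$.

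Adding the two gives the claimed $O(|F| + |R||E_m|)$. The one subtle step, and the place where I would spend most of the write-up, is justifying why the work inside the inner loop is $O(1)$ rather than depending on $|e.F_c|$ and on the size of $r.F_c$ used to compute the relative position $i$. I would handle this either by invoking the standard modelling assumption that the number of facilities located on a single edge is bounded by a constant (so that iterating $e.F_c$ and $F'$ and locating $v_s,v_e$ within $r.F_c$ is constant work), or by absorbing the $|F|$ contribution through an amortization argument: each facility $f$ is touched in the inner body only via an edge $e$ with $f \in e.F_c$, and since each facility belongs to exactly one such edge, the total work charged to facility-list traversals across all iterations of a fixed route is $O(|F|)$, dominated by the stated bound. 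Either way, the total remains $O(|F| + |R||E_m|)$, completing the argument.
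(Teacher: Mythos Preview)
Your proposal is correct and follows essentially the same decomposition as the paper's own proof: bound the first loop by $O(|F|)$, bound the second by $O(|R|)$ outer iterations times $O(|E_m|)$ inner iterations, and add. The paper's proof is in fact terser than yours and simply asserts the inner body is constant work; your discussion of the $e.F_c$ and $r.F_c$ traversals and the bounded-facilities-per-edge assumption makes explicit a point the paper leaves implicit.
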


\begin{proof}
The first loop in the algorithm takes time $O(|F|)$.
In the second loop, the outer loop runs $O(|R|)$ times.
The inner loop depends on the number of edges that a route traverses.
Let $|E_m|$ be the maximal number of edges that any route traverses.
Then the second loop has time complexity $O(|R||E_m|)$.
In total, the time complexity is $O(|F| + |R||E_m|)$.
\end{proof}

\section{Graph Augmentation}
\label{sec:aug}

\subsection{Overview}
The main idea of the graph augmentation algorithm (AUG) is to
augment the road network graph $G$ with the facilities and
the first and the last network points of each route.
In the augmented graph $G' = (V', E')$ it is guaranteed that each route starts from
a vertex and ends at a vertex.
Meanwhile, each vertex in $G'$ stores the identifiers of the covering routes.

Then each edge's score in $G'$ can be calculated
by summing up the scores distributed by the routes that cover both ends points.
The score contributed by a route is calculated based on the specific score distribution model used, as discussed in Section~\ref{subsec:interest}.

Next, AUG examines every edge in $G'$ with a score,
and identifies the edges with the highest score (the optimal edges).

Finally, the algorithm maps the optimal edges back to the original graph $G$, where they are segments.
Then AUG merges connected segments, if any, to form maximal segments,
and returns them as the result.

Figure~\ref{fig:augment} illustrates the graph in Figure~\ref{fig:ex_1}
after being augmented with routes $r_1$, $r_2$, and $r_3$
and facilities $f_1, \, f_2, \,$ and $f_3$.
Note that each vertex in the augmented graph has a list of the  identifiers of the routes that cover the vertex.
We use $\mathit{AL}(v_i)$ to denote the attraction list of $v_i$.
Intersecting the sets of two adjacent vertices gives
the routes that cover the edge, whose score can then be calculated
according to a score distribution model.
For example, $\mathit{AL}(A) = \{r_1, r_2, r_3\}$ and $\mathit{AL}(H) = \{r_1, r_2, r_3\}$.
So, the set of routes that cover edge $e_{A,H}$ is
$\mathit{AL}(A) \cap \mathit{AL}(H) = \{r_1, r_2, r_3\}$.
Then the score of $e_{A,H}$ is calculated based on the score distribution model used.

\begin{figure}[ht]
  \centering
    \includegraphics[width=0.68\textwidth]{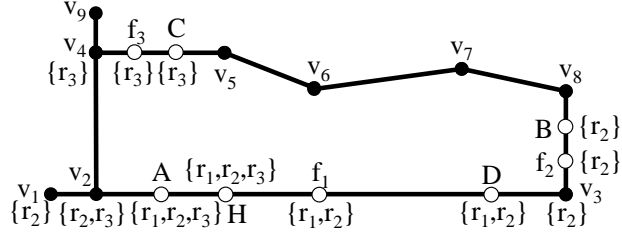}
\vspace{-0.3cm}
  \caption{The Augmented Road Network Graph}
  \label{fig:augment}
\end{figure}

Next, AUG finds the edges with the highest score by
examining all edges in the augmented graph.
These edges are then mapped back to the original graph,
and become road segments, which are possibly merged into longer segments.
These segments are returned as the result.
In Figure~\ref{fig:augment}, after edge $e_{A,H}$ is identified
as the optimal edge with the highest score, it is mapped back to the original
graph, and the segment $\overline{\mathit{AH}}$ is returned as the result.

\subsection{The AUG Algorithm}

Algorithm~\ref{alg:findSegAug} presents details of the AUG algorithm.
The set of edges that have network points either from facilities or routes is obtained (lines~1--2).
Graph $G'$ is obtained by augmenting graph $G$ with the network points of $F$ and $R$ (line~3).
Note that some network points of routes or facilities may happen to be vertices.
These network points are excluded from being augmented into $G$.
Then AUG updates the covering routes of the newly added vertices (lines~4--6).
It initializes the result set $S$ and the highest score seen so far, $\mathit{optS}$ (line~7).
In the next loop (lines~8--14), the scores of the edges in $G'$ are calculated according to the score distribution model used (line~10),
and the optimal edges in $G'$ are identified and stored in $S$.
In lines~15--17, each edge in $G'$ is mapped back to $G$.
Mapping back to the original graph is a trivial task.
Recall that each new network point has an $\mathit{eid}$ field that helps
identify the original edge.
If a segment can be extended (i.e., the neighboring segment is also an optimal segment),
it is extended (line~17).
Finally, the result set $S$ is returned (line~18).

\begin{algorithm}
\SetKwData{Left}{left} \SetKwData{This}{this} \SetKwData{Up}{up}
\SetKwFunction{Union}{Union}
\SetKwFunction{FindCompress}{FindCompress} \SetKwInOut{Input}{input}
\SetKwInOut{Output}{output}
\SetKwIF{If}{ElseIf}{Else}{if}{then}{else if}{else}{endif}
\SetKwFor{For}{for}{do}{endfor} \SetKwFor{While}{while}{do}{endw}
\SetKwFor{ForEach}{foreach}{do}{endfch}
\caption{AUG($G, R, F, \delta, M$) }
\label{alg:findSegAug}
$E_F \leftarrow $ the set of edges where $F$ are located; \\
$E_R \leftarrow $ the set of edge that $R$ intersect; \\
$G' \leftarrow \mathrm{Augment}(G, F, R)$; \\

\ForEach{$e \in (E_F\cup E_R)$}{
    \ForEach{$v_i \in e.F_c \cup e.O_c$}{
        $v_i.R_c \leftarrow \mathrm{getCoverRouteIds}(v_i, e.R_c)$; \\
    }
}

$S \leftarrow \emptyset$;
$\mathit{optS} \leftarrow 0 $; \\

\ForEach{$e=(v_s, v_e) \in G'.E'$}{
    $R' \leftarrow v_s.R_c \cap v_e.R_c;$ \\
    $\mathit{score}_M(e) \leftarrow$ compute the score of $e$ based on $M$; \\
    \If{$\mathit{score}(e) > \mathit{optS}$}{
        $\mathit{optS} \leftarrow \mathit{score}(e)$; \\
        $S \leftarrow \{ e \}$;
    }
    \textbf{else if} $\mathit{score}(e) = \mathit{optS}$ \textbf{then}  $S \leftarrow S.\mathrm{add}(e)$;
}

\ForEach{$s \in S$}{
    map $s$ to $G$; \\
    \textbf{if} $\mathrm{canExtend}(s, G)$ \textbf{then}  $\mathrm{extend}(s, G)$;
}
return $S$;
\end{algorithm}

This process has two implications.
First, an edge in $G$ may be split into several edges in $G'$.
After the optimal edges are identified in $G'$, they must be
mapped back to $G$.
Second, for an edge in $G'$, a route either covers it or does not cover it.
The \emph{partial intersection} relationship
between a route and an edge is eliminated in $G'$.

It can be seen that it is sufficient to
just augment the original graph with
the start point and the end point of each route for finding
the optimal segments
because the internal points in a route are vertices in $G$.

The algorithm splits some road segments.
\begin{packed_items}
\item
If the two end points of a route do not happen to be vertices in $G$, they are added as new vertices into the road network,
as they are covered by at least one route.
\item
If a facility does not happen to be a vertex in $G$, it is added as a new vertex if it attracts any route,
e.g., $f_1$, $f_2$, and $f_3$ in Figure~\ref{fig:augment}.
Facility $f_4$ no longer exists in the augmented graph because it does not attract any routes.
\item
In order to accommodate these new vertices, some edges in $G$ are replaced with ``smaller" edges in $\mathit{G'}$.
    For example, in Figure~\ref{fig:augment}, the edge $e_{2,3}$ is
    replaced with the following edges:
    $e_{v_2,A}$, $e_{A,H}$, $e_{H,f_1}$, $e_{f_1,D}$, and $e_{D,v_3}$.
\end{packed_items}

When the road network is augmented, every vertex in $G'$
records the identifiers of the routes
that cover this location.
Figure~\ref{fig:augment} also shows the identifiers of the routes
that are recorded at each vertex in the augmented graph.

In Figure~\ref{fig:augment}, $r_2$ has score $12$, and is attracted by $2$ facilities.
Thus, each edge in $\mathit{G'}$ that is covered by $r_2$
should receive a score $\frac{\mathit{score}(r_2)}{3} = 4$.
Route $r_1$ has score 8, and is attracted by one facility.
Each edge covered by it in $\mathit{G'}$
receives a score $\frac{\mathit{score}(r_1)}{2} = 4$.
Route $r_3$ has score 6, and is attracted by 1 facility.
Therefore, each covered edge received score $\frac{\mathit{score}(r_3)}{2} = 3$.

For each edge in the augmented road network, the algorithm takes an intersection
of the route identifiers of its two vertices, and computes its score.
For instance,
$\mathit{score}({e_{v_2,A}}) = 4+3 = 7$,
$\mathit{score}({e_{A,H}}) = 4+4+3 = 11$.
The scores of other edges can be computed in a similar way.

After that, AUG identifies the optimal edge(s) with the highest score.
Since $\overline{\mathit{AH}}$ has the highest score in $G'$, the optimal edge is $(A,H)$.
It is mapped back, and becomes the segment $\overline{\mathit{AH}}$.
As AUG cannot extend it to a longer segment, $\overline{\mathit{AH}}$ is returned as the result.

\subsection{Analysis}

We analyze the time complexity of the AUG algorithm, and show its completeness and correctness.

\begin{theorem}
The AUG algorithm has time complexity
$O((|E|+|F|+|R|)|R| +|S|)$.
\end{theorem}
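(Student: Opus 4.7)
The plan is to walk through Algorithm~2 phase by phase, bound the cost of each phase in terms of $|E|, |F|, |R|,$ and $|S|$, and sum the bounds. Throughout, I would treat the per-vertex attraction lists $v.R_c$ as sorted lists (or hash sets) of route ids, so that set intersection and membership run in time linear in the argument sizes, and I would use the already-proved bound for \textrm{PreProcess} whenever needed.

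First I would bound the size of the augmented graph $G'$. Each route contributes at most two new network points (its start and end) and each facility contributes at most one, so $|V'| = |V| + O(|F|+|R|)$ and, since augmenting a vertex onto an existing edge replaces that edge by two, $|E'| = |E| + O(|F|+|R|) = O(|E|+|F|+|R|)$. Consequently, lines~1--3 of AUG, which compute $E_F, E_R$ and build $G'$, cost $O(|E|+|F|+|R|)$ by a single pass through the relevant edge sets populated by \textrm{PreProcess}.

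Next I would analyse the two main loops. In lines~4--6, for each edge $e \in E_F \cup E_R$ and each newly inserted vertex $v_i$ on that edge, we derive $v_i.R_c$ from $e.R_c$ by filtering the routes that actually cover $v_i$. The total number of new vertices is $O(|F|+|R|)$, and each of them scans $e.R_c$ of size $O(|R|)$ in the worst case, contributing $O((|F|+|R|)\,|R|)$. In the central loop of lines~8--14, the algorithm visits all $|E'| = O(|E|+|F|+|R|)$ edges in $G'$ and, for each, computes the intersection $v_s.R_c \cap v_e.R_c$ (followed by a score evaluation whose cost is linear in $|R'|$ for any reasonable distribution model $M$). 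Each such intersection is $O(|R|)$, so this phase is $O((|E|+|F|+|R|)\,|R|)$, and it dominates the previous one.

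Finally, lines~15--17 iterate over the output set $S$, mapping each optimal edge back to $G$ via its stored $\mathit{eid}$ field in $O(1)$ per edge and attempting a local extension whose amortized cost per produced segment is also $O(1)$, since distinct optimal segments of $G$ correspond to disjoint edge groups of $G'$ and therefore the total extension work is $O(|S|)$. Adding the phase costs gives
\[
O(|E|+|F|+|R|) + O((|F|+|R|)\,|R|) + O((|E|+|F|+|R|)\,|R|) + O(|S|),
\]
which collapses to $O((|E|+|F|+|R|)\,|R| + |S|)$ as claimed. The main obstacle I anticipate is the bookkeeping in lines~4--6: one must argue that the per-vertex filtering does not exceed the size of the intersecting-route list of the hosting edge, and that no route or facility is charged more than a constant number of times across all edges of $G'$, so that the augmentation-induced subdivisions do not blow up the count beyond $|E|+|F|+|R|$.
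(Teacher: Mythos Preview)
Your proposal is correct and follows essentially the same phase-by-phase decomposition as the paper: bound the size of $G'$, bound the augmentation and attraction-list update loops, bound the main scoring loop over $E'$, and bound the output loop, then sum. The only noticeable difference is in the first inner loop (lines~4--6): the paper charges this loop only $O(|F|+|R|)$ by counting iterations of the nested loop via an even-distribution argument and implicitly treating \texttt{getCoverRouteIds} as constant cost, whereas you more conservatively charge $O((|F|+|R|)|R|)$ by allowing each new vertex to scan the hosting edge's full $e.R_c$. Your bound is the better-justified one, and since both are dominated by the $O((|E|+|F|+|R|)|R|)$ scoring loop, the final complexity is unchanged. Your closing worry about ``bookkeeping blow-up'' is therefore already handled by your own $O((|F|+|R|)|R|)$ estimate and need not be flagged as an obstacle.
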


\begin{proof}
In AUG,
the graph augmentation takes $O(|F|+2|R|)$ (line~4), because each
route contributes exactly two vertices.

In the first loop (lines 5--9),
the worst case is that facilities and routes are evenly distributed to the
road network so that every edge in $G$ is augmented.
In this case, for an edge $e$, $e.F_c + e.O_c = \frac{|F|+2|R|}{|E|}$.
Therefore, the outer loop takes $|E_F|+|E_R| \leq |E|$ and $|L| = O(\frac{|F|+2|R|}{|E|})$.
So this loop takes time $O(|F|+2|R|)$.

In the second loop (line~11--18), $|E'| \leq |E|+|F|+2|R|$.
In line~15, $|R'| \leq |R|$. Therefore,
this loop takes time $O((|E|+|F|+2|R|)|R|)$.
The third loop takes time complexity $|S|$.
Note that $|S|$ is usually very small.

To summarize, the time complexity of AUG is
$O(|F|+2|R|+(|E|+|F|+2|R|)|R|+|S|)$.
After simplification, the time complexity is
$O((|E|+|F|+|R|)|R| +|S|)$.
\end{proof}

We proceed to show the correctness and completeness of AUG.

\begin{theorem}
A segment output by the AUG algorithm is an optimal segment.
\end{theorem}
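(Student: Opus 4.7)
The plan is to verify the three conditions in the definition of the optimal segment query for any segment $s$ returned by AUG: (1) the score is constant along $s$, (2) that common score is the global maximum over $P$, and (3) $s$ is maximal under inclusion. The central technical fact I will establish is that, under any admissible score distribution model $M$, the function $\mathit{score}_M(\cdot)$ on $P$ is piecewise constant, with potential breakpoints occurring \emph{only} at the vertices of the augmented graph $G'$. Given that fact, correctness of AUG follows directly from the maximum-selection and extension steps.

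First, I would prove the piecewise-constancy claim. Let $e'$ be an edge of $G'$. By construction of the augmentation, every route start/end point and every facility that lies on the underlying edge of $G$ has been promoted to a vertex in $V'$. Consequently, for any two internal points $p_1,p_2$ of $e'$, the set of routes that cover $p_1$ coincides with the set covering $p_2$ (no route can begin, end, or change its covered-subsegment index between $p_1$ and $p_2$), and the partition into subsegments induced by the facilities on each covering route is identical at $p_1$ and at $p_2$. Since the score contributed by each covering route to a point depends only on that covering set and on which subsegment the point occupies, we obtain $\mathit{score}_M(p_1)=\mathit{score}_M(p_2)$. Taking limits at the endpoints of $e'$, this common value equals the edge score $\mathit{score}_M(e')$ computed in line~10.

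Next, because every point $p\in P$ lies on some edge of $G'$, the piecewise-constancy property implies $\max_{p\in P}\mathit{score}_M(p) = \max_{e'\in E'}\mathit{score}_M(e')$. AUG's inner loop (lines~8--14) computes $\mathit{optS}$ as the latter maximum and collects into $S$ exactly the edges attaining it; mapping these back to $G$ (lines~15--17) yields segments whose every point has score $\mathit{optS}$, verifying conditions~1 and~2 of the definition. For condition~3, the extension routine $\mathrm{extend}(s,G)$ in line~17 grows $s$ as long as a neighboring segment has the same score $\mathit{optS}$; when it halts, either $s$ terminates at a network point whose adjacent segment has a strictly smaller score, or $s$ covers the entire connected component of the level set, so no proper superset in $S$ satisfies conditions~1 and~2.

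The hard part will be the piecewise-constancy argument, since one must verify that \emph{every} feature that can cause $\mathit{score}_M$ to jump has indeed been inserted as a vertex of $G'$. I would handle this by enumerating the dependencies of $\mathit{score}_M(p)$: the identity of covering routes (handled by augmenting route start/end points) and, for distribution models such as the exponentially weighted one, the index of the subsegment of each covering route in which $p$ sits (handled by augmenting the attracting facility locations, recorded in $e.F_c$ during preprocessing). Once all such discontinuities are accounted for, piecewise constancy on each $e'\in E'$ is immediate and the remainder of the proof reduces to the bookkeeping described above.
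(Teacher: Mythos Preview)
Your proposal is correct and follows the same underlying idea as the paper---scan every edge of $G'$, take the maximum, then extend---but you are considerably more rigorous: the paper offers only a two-sentence sketch (``every edge in the augmented graph is checked to find the value for $\mathit{optS}$ \ldots\ any segment in the result set $S$ is optimal''), whereas you explicitly isolate and prove the key piecewise-constancy lemma and then verify all three clauses of the definition. One minor caveat: not \emph{every} facility is promoted to a vertex of $G'$ (e.g., $f_4$ in the running example is dropped because it attracts no route), so your enumeration of breakpoints should read ``every facility that attracts some route''; this does not affect the argument, since non-attracting facilities contribute no score discontinuity.
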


\textsc{Proof Sketch.}
In AUG, every edge in the augmented graph is checked to find the the value for $\mathit{optS}$.
AUG then adds a segment iff the segment has a score equal to $\mathit{optS}$.
It implies that any segment in the result set $S$ must is optimal.

\begin{theorem}
The AUG algorithm finds every optimal segment in the graph.
\end{theorem}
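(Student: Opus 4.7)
\textsc{Proof Proposal.} The plan is to show that (i) the score function $\mathit{score}_M(\cdot)$ is constant along the interior of each edge of the augmented graph $G'$, so that every optimal segment is a union of entire $G'$-edges whose score equals $\mathit{optS}$; and (ii) the loops in AUG enumerate every edge of $G'$, record exactly those whose score equals $\mathit{optS}$, and extend them into maximal segments, thereby producing every optimal segment of $G$.

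First I would establish the following claim: for any edge $e' = (u, v) \in E'$ and any two network points $p_1, p_2$ in the interior of $e'$, $\mathit{score}_M(p_1) = \mathit{score}_M(p_2)$. The score at a point $p$ is the sum, over all routes $\mathit{ro}$ covering $p$, of the share assigned to the subsegment of $\mathit{ro}.r$ that contains $p$ under the model $M$. The only places at which the set of covering routes can change are route start/end points, and the only places at which a point's enclosing subsegment (as induced by attracting facilities) can change are facility locations on the route. By construction of $G'$, every such breakpoint — route endpoints on edges, and facilities that are not already vertices of $G$ — is promoted to a vertex of $G'$. Hence on the open interior of an edge of $G'$, the set of covering routes is constant and each such route assigns points to the same subsegment; thus the contribution of every route is the same throughout, and the total score is constant.

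Next I would use this claim to reason about any optimal segment $s_{\mathit{opt}}$ satisfying the three conditions of the Optimal Segment Query. Condition~1 forces every point of $s_{\mathit{opt}}$ to share the same (optimal) score $\mathit{optS}$. By the claim, the partition of $s_{\mathit{opt}}$ induced by the vertices of $G'$ it crosses consists of entire $G'$-edges, each of which therefore has score $\mathit{optS}$ (with the possibility that the endpoints of $s_{\mathit{opt}}$ themselves coincide with $G'$-vertices, since maximality from condition~3 forces extension to breakpoints). In the main loop of AUG (lines~8--14), every edge of $G'$ is evaluated, the running maximum $\mathit{optS}$ is maintained, and every edge whose score ties the maximum is added to $S$; at termination $S$ contains exactly the set of optimal $G'$-edges. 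Consequently every $G'$-edge comprising $s_{\mathit{opt}}$ is in $S$. The mapping-back and extension loop (lines~15--17) then reconstructs $s_{\mathit{opt}}$: adjacent optimal $G'$-edges are merged via \textrm{canExtend}/\textrm{extend}, yielding the maximal segment demanded by condition~3.

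The main obstacle is the constancy claim, because it rests on a careful enumeration of \emph{all} possible sources of score variation under an arbitrary distribution model $M$ permitted by the framework. I would justify it by appealing to the Route Scoring Property together with the fact that every model described in Section~\ref{subsec:interest} distributes a route's score uniformly within each subsegment delimited by attractor facilities; hence only the two event types above can produce a discontinuity. A secondary technical point is to argue that the extension step in line~17 never misses a maximal optimal segment: since every adjacency between optimal $G'$-edges is detectable locally from $S$ and $G$, iterating extension to fixed point yields exactly the maximal optimal segments, completing the proof.
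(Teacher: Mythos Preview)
Your proposal is correct and follows the same approach as the paper's proof sketch: every edge of $G'$ is scanned, those with score equal to $\mathit{optS}$ are collected, and connected ones are extended into maximal segments. Your version is in fact more rigorous than the paper's, since you explicitly justify the key (but unstated there) claim that $\mathit{score}_M$ is constant on the interior of each $G'$-edge, which is what makes the edge-by-edge enumeration exhaustive.
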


\textsc{Proof Sketch.}
This theorem can be proved with the following two points.
First, AUG searches the graph to make sure that every edge is scanned.
Second, if an edge has a score equal to $\mathit{optS}$,
it either is appended to an existing segment in $S$ or is added to $S$ as a new segment that might be extended later.
Therefore, no segment with score equal to $\mathit{optS}$ is missed.

\section{Iterative Partitioning}
\label{sec:ite}

\subsection{Overview}
Although the augmentation approach is effective at finding the optimal segments,
we can improve its efficiency by pruning unpromising segments.

The idea of the ITE algorithm is to quickly identify a subsegment of
an optimal segment (\emph{optimal subsegment}) and then extend the optimal subsegment into an entire optimal segment.
Therefore, ITE organizes the segments using a heap such that
those segments that are most likely to contain an optimal subsegment
get examined first.
If the segment under examination is an optimal subsegment
then the entire optimal segment can be found by extending it.
In addition, the optimal score can be calculated easily.
Otherwise, the segment is partitioned into smaller segments,
whose likelihoods of having an optimal subsegment are also calculated,
upon which they are inserted back into the heap.

Given a segment $s$, we use the scores of the intersecting routes to measure
its likelihood of having an optimal subsegment.
The segment containing the optimal subsegment is likely to
have many intersecting routes, from which it is likely to receive a high score.

For example, in Figure~\ref{fig:ex_1},
initially the edges that intersect any route are inserted into the heap.
The edge $v_1v_3$ has the most intersecting routes and so is likely to contain an optimal segment.
So $v_1v_3$ is partitioned into equal-sized, smaller segments.
ITE calculates the intersecting routes for each of them,
and adds them to the heap.
This process continues until a subsegment of an optimal segment is found.
In this case, a subsegment $s$ of $\overline{\mathit{AH}}$ is found.
Then $s$ is extended to find that $\overline{\mathit{AH}}$ is the entire optimal segment.

Both AUG and ITE partition the edges of the network graph into smaller pieces.
The main difference between ITE and AUG lies in how a subsegment of
the optimal segment is found.
In AUG, the partitioning of edges in the network graph is unguided.
Every edge that has an attracting facility or a route end point is partitioned.
In ITE, the partitioning of edges is guided by the likelihoods
of the edges to have an optimal subsegment.

\subsection{The ITE Algorithm}
Recall that we are interested in finding those segments that contain an optimal subsegment.
Before presenting the ITE algorithm,
we need definitions that relate the score of a segment to the scores of its network points,
as defined in Section~\ref{subsec:interest}.
\begin{definition}
Given a road segment $s$, we define its min score $s.\mathit{min}$
and max score $s.\mathit{max}$ as follows.
\begin{align*}
    s.\mathit{min} & = \min_{p \in s}\mathit{score}_M(p) \\
    s.\mathit{max} & = \max_{p \in s}\mathit{score}_M(p)
\end{align*}
\end{definition}

By definition, an optimal segment $s_{\mathit{opt}}$ has $s_\mathit{opt}.\mathit{min} = s_\mathit{opt}.\mathit{max}$.

Next, we define upper and lower bound scores of a segment
$s$ in order to only process
those segments that may contain an optimal location.

\begin{definition}
\label{def:lb_ub}
Given a segment $s$ and a score distribution model $M$,
let $s.I$ and $s.C$ be defined as in Section~\ref{subsec:routesAndFac},
and let $s.\mathit{lb}$ and $s.\mathit{ub}$ denote the upper and lower bound scores of $s$.
We define:

\begin{align*}
s.\mathit{lb} &=  \sum_{r_i\in s.C}{w_M(j_i, k_i)\mathit{score}(r_i)} \\
s.\mathit{ub} &=  \sum_{r_i\in s.I}{w_M(j_i, k_i)\mathit{score}(r_i)}
\end{align*}
where $s$ is the $j_i$th segment of $r_i$ with $k_i$ attracting facilities, and
$w_M(j_i, k_i)$ computes the fraction of $r_i$'s score to be assigned to $s$ based on $M$.
\end{definition}

If a segment has facilities located on it, the segment has subsegments that may be assigned different scores based on the score distribution model.
In this case, the lower bound score of the segment still takes the smallest score value being assigned to the subsegments,
while the upper bound score takes the largest score value.

\begin{lemma}
Let $M$ be a score distribution model where a route can only distribute non-negative scores.
Let the min and max scores and the lower and upper bound scores of $s$ be defined as above.
Given a road segment $s$, we have $s.\mathit{lb} \leq s.\mathit{min}$ and $s.\mathit{ub} \geq s.\mathit{max}$.
\end{lemma}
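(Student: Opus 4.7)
The plan is to compare the two bound quantities point-by-point against values of $\mathit{score}_M(\cdot)$ taken at carefully chosen points of $s$. The conceptual anchor is that, by the definition of the score-distribution framework in Section~\ref{subsec:interest}, the score at any network point $p$ equals the sum, over all routes whose trajectory passes through $p$, of the portion of that route's score assigned to the subsegment in which $p$ lies. Because $M$ only distributes non-negative scores, each summand is non-negative; this monotonicity is what will let me replace sums over one route-set by sums over a smaller one in one direction, and vice versa in the other.

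For the lower bound, I would fix a point $p^\star \in s$ achieving $s.\mathit{min}$. Every route $r_i \in s.C$ covers $s$, so in particular it passes through $p^\star$, and therefore contributes $w_M(j_i, k_i)\mathit{score}(r_i)$ to $\mathit{score}_M(p^\star)$ (choosing, if $s$ straddles a facility, the subsegment index $j_i$ that produces the smallest such contribution, consistent with the clarifying remark just after Definition~\ref{def:lb_ub}). Since every other route contributing to $\mathit{score}_M(p^\star)$ is also a route of $s.I$ and contributes a non-negative amount, dropping these extra terms only decreases the sum. Hence
\[
s.\mathit{lb} \;=\; \sum_{r_i \in s.C} w_M(j_i, k_i)\mathit{score}(r_i) \;\leq\; \mathit{score}_M(p^\star) \;=\; s.\mathit{min}.
\]

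For the upper bound, I would symmetrically fix a point $p^{\star\star} \in s$ achieving $s.\mathit{max}$. Any route contributing to $\mathit{score}_M(p^{\star\star})$ must pass through $p^{\star\star}$, so its trajectory contains a point of $s$, meaning it lies in $s.I$. Its contribution to $\mathit{score}_M(p^{\star\star})$ is $w_M(j_i', k_i)\mathit{score}(r_i)$ for the subsegment index $j_i'$ that contains $p^{\star\star}$; using the convention that, when $s$ spans multiple subsegments of $r_i$, the value $w_M(j_i, k_i)$ inside $s.\mathit{ub}$ picks the largest such weight, we have $w_M(j_i', k_i)\mathit{score}(r_i) \leq w_M(j_i, k_i)\mathit{score}(r_i)$. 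Summing over the routes containing $p^{\star\star}$ and then extending the sum to all of $s.I$ (again using non-negativity of scores) gives
\[
s.\mathit{max} \;=\; \mathit{score}_M(p^{\star\star}) \;\leq\; \sum_{r_i \in s.I} w_M(j_i, k_i)\mathit{score}(r_i) \;=\; s.\mathit{ub}.
\]

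The main obstacle is the bookkeeping around the subsegment index $j_i$ when $s$ contains facilities and thus crosses subsegment boundaries of some route $r_i$. The cleanest way to handle this is to state explicitly, before the two displayed inequalities above, that in such cases $w_M(j_i, k_i)$ in $s.\mathit{lb}$ is interpreted as the minimum over the relevant $j_i$ and in $s.\mathit{ub}$ as the maximum, which is exactly the convention indicated by the paragraph immediately following Definition~\ref{def:lb_ub}. Once that convention is fixed, the non-negativity hypothesis on $M$ makes the rest of the argument a straightforward comparison of sums.
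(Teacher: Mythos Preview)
Your proposal is correct and follows essentially the same approach as the paper: fix a point of $s$ attaining $s.\mathit{min}$ (respectively $s.\mathit{max}$), observe that the routes in $s.C$ all contribute to its score (respectively that every route contributing to its score lies in $s.I$), and use non-negativity of the distributed scores to drop or add terms. Your explicit handling of the subsegment-index convention when $s$ straddles facilities is more careful than the paper's version, which leaves that bookkeeping implicit, but the underlying argument is the same.
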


\begin{proof}
Suppose a network location $p_1 \in s$ s.t. $\mathit{score}_M(p_1) = s.\mathit{min}$.
Since each route $r \in s.C$ contains $s$, we have $p_1 \in r$.
So $p_1$ at least gains the scores distributed by the routes in $s.C$.
Then $\mathit{score}_M(p_1) \geq \sum_{r_i\in s.C}{w_M(j_i, k_i)\mathit{score}(r_i)}$.

Let $p_2 \in s$ be a location s.t. $\mathit{score}_M(p_1) = s.\mathit{max}$.
We show that the set of routes that contribute scores to $p_2$ is a subset of $s.I$.
The set of routes that contribute scores to $p_2$ consists of two sets,
the set of routes that contain $s$ ($s.C$) and the set of routes that cover $p_2$ ($s.I'$).
Each route in $s.I'$ must also intersect $s$, so $s.I' \subset s.I$.
Since $s.C \subseteq s.I$, we have $(s.C \cup s.I') \subseteq s.I$.
That is, $\mathit{score}_M(p_1) \leq \sum_{r_i\in s.I}{w_M(j_i, k_i)\mathit{score}(r_i)}$.
\end{proof}

Recall that Algorithm PreProcess builds a mapping from each edge $e$ to its intersecting routes $e.R_c$.
We then compute the upper and lower bound scores for a segment $s \subseteq e$ by retrieving its $s.I$ and $s.C$ from $e.R_c$.
The algorithm can use the bounds to prune the segments
that cannot contain an optimal subsegment.
\begin{lemma}
\label{lem:prune1}
Given two segments $s_1$ and $s_2$, if $s_1.\mathit{lb} > s_2.\mathit{ub}$,
then $s_2$ does not contain an optimal subsegment.
\end{lemma}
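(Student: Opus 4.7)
The plan is to chain the inequalities supplied by the preceding lemma (which gives $s.\mathit{lb} \leq s.\mathit{min}$ and $s.\mathit{ub} \geq s.\mathit{max}$ for any segment $s$) together with the hypothesis $s_1.\mathit{lb} > s_2.\mathit{ub}$ so as to show that every point in $s_1$ carries a strictly higher score than every point in $s_2$. From this it follows that no point of $s_2$ can attain the global maximum of $\mathit{score}_M$ over $P$, and therefore, by the definition of the optimal segment query, no subsegment of $s_2$ can be a subsegment of an optimal segment.

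Concretely, I would proceed in three short steps. First, I apply the previous lemma twice to obtain $s_1.\mathit{min} \geq s_1.\mathit{lb}$ and $s_2.\mathit{max} \leq s_2.\mathit{ub}$. Second, I combine these with the assumption to deduce
\[
\max_{p\in s_2}\mathit{score}_M(p) = s_2.\mathit{max} \leq s_2.\mathit{ub} < s_1.\mathit{lb} \leq s_1.\mathit{min} = \min_{p'\in s_1}\mathit{score}_M(p'),
\]
so every point $p\in s_2$ satisfies $\mathit{score}_M(p) < \mathit{score}_M(p')$ for every $p'\in s_1$. Third, I argue by contradiction: if $s_2$ contained an optimal subsegment $s'$, then by the definition of an optimal segment every $p\in s'\subseteq s_2$ would satisfy $\mathit{score}_M(p)\geq \mathit{score}_M(q)$ for all $q\in P$, in particular for any $q\in s_1$; this directly contradicts the strict inequality just derived.

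There is no substantive obstacle; the only subtlety worth flagging explicitly is conceptual rather than technical, namely to justify that an \emph{optimal subsegment} inherits the defining property of an optimal segment, so that each of its points attains the global maximum of $\mathit{score}_M$ over $P$. Once that observation is in place, the contradiction is immediate from the chained bounds, and the proof essentially consists of the single displayed inequality above.
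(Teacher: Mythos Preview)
Your proposal is correct and follows essentially the same approach as the paper: both arguments chain the bounds from Lemma~2 to obtain $\mathit{score}_M(p_1) \geq s_1.\mathit{lb} > s_2.\mathit{ub} \geq \mathit{score}_M(p_2)$ for arbitrary $p_1\in s_1$, $p_2\in s_2$, and conclude that no point of $s_2$ can be an optimal location. The paper's version is terser (it does not pass through $s.\mathit{min}$ and $s.\mathit{max}$ explicitly, nor frame the final step as a contradiction), but the logical content is identical.
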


\begin{proof}
We prove Lemma \ref{lem:prune1} by showing that $s_2$ cannot contain any optimal location.
Assume two points $p_1 \in s_1$ and $p_2 \in s_2$.
We have
$\mathit{score}_M(p_1) \geq s_1.\mathit{lb} > s_2.\mathit{ub} \geq \mathit{score}_M(p_2)$.
So $p_2$ cannot be an optimal location.
\end{proof}

With Lemma~\ref{lem:prune1},
segments that do not contain an optimal subsegment
can be pruned.

The second strategy employed in ITE is to prune the
segments that eventually lead to the same optimal segment.
These segments should be detected and pruned early
to avoid partitioning them further and making unnecessary calculations.

\begin{lemma}
\label{lem:prune2}
Given two segments $s_1$ and $s_2$,
if $s_2.I \subset s_1.C$ and $s_2$ contains an optimal subsegment of an optimal segment,
then $s_1$ also contains an optimal subsegment of the same optimal segment.
\end{lemma}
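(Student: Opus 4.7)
The plan is to show that some point of $s_1$ attains the optimal score $\mathit{optS}$, from which the existence of an optimal subsegment inside $s_1$ will follow. I would begin by fixing any point $p_2$ in the optimal subsegment that $s_2$ contains, so that $\mathit{score}_M(p_2)=\mathit{optS}$. Each route usage object $\mathit{ro}$ whose route contributes to this score satisfies $p_2\in \mathit{ro}.r\cap s_2$, and therefore $\mathit{ro}\in s_2.I$. By the hypothesis $s_2.I\subseteq s_1.C$, every such route covers all of $s_1$, so it also contributes to the score of every point $p_1\in s_1$.

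The next step is to bound the score at a carefully selected $p_1\in s_1$. For each contributing route $r$, the subsegment of $r$ through $p_2$ carries weight $w_M(j_r,k_r)$, and since $r$ covers $s_1$ in full, $s_1$ meets one or more of the $k_r$ subsegments of $r$. I would choose $p_1$ so that, for each $r$, it lies in a subsegment of $r$ whose weight matches (or exceeds) $w_M(j_r,k_r)$; such a $p_1$ exists because every contributing route passes through both $s_1$ and the neighborhood of $p_2$ in $s_2$, so the facility-induced partition of $r$ that is visible at $p_2$ is also visible somewhere inside $s_1$. Summing the per-route bounds yields $\mathit{score}_M(p_1)\geq \mathit{score}_M(p_2)=\mathit{optS}$, and the maximality of $\mathit{optS}$ over the network forces equality, placing $p_1$ on some optimal segment.

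The main obstacle is precisely this subsegment-alignment step: when $M$ assigns unequal weights to subsegments, knowing $s_2.I\subseteq s_1.C$ alone does not match fractions route by route without a geometric argument. I expect to resolve it by a case split on each contributing route $r$: either the facilities attracting $r$ that induce its subsegmentation lie outside the portion of $r$ spanning $s_1$ (in which case $s_1$ sits entirely in one subsegment of $r$ and we obtain uniform contributions that dominate the one at $p_2$), or they lie inside $s_1$ (in which case $s_1$ literally contains a copy of the subsegment of $r$ in which $p_2$ sits). In either case the per-route inequality holds. Finally, to conclude that $s_1$ contains a subsegment of the \emph{same} optimal segment $s_{opt}$, I would invoke the maximality clause of the optimal segment definition: because the routes realizing the optimal score at $p_1$ and at $p_2$ coincide and their per-route contributions agree, the maximal segment of points with score $\mathit{optS}$ extending through $p_1$ is the same as the one through $p_2$, namely $s_{opt}$.
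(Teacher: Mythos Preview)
Your pointwise approach differs from the paper's, and the step you correctly flag as the main obstacle is a genuine gap that your case split does not close. In your first case (no attractor of $r$ lies in the portion of $r$ spanning $s_1$), $s_1$ sits in a single subsegment $j_1$ of $r$, but nothing ties $w_M(j_1,k_r)$ to $w_M(j_r,k_r)$: for a decreasing-weight model $M$, $j_1$ can carry strictly less weight than the subsegment through $p_2$, so ``uniform contributions that dominate the one at $p_2$'' is unjustified. In your second case, a facility of $r$ inside $s_1$ does not force the subsegment of $r$ containing $p_2$ to lie in $s_1$; $p_2\in s_2$, and $s_2$ need not meet $s_1$ at all. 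Even if each route individually admitted some point of $s_1$ with weight at least $w_M(j_r,k_r)$, you would still owe a \emph{single} $p_1\in s_1$ that works simultaneously for all contributing routes, and you give no argument for that. Your ``same optimal segment'' conclusion then also collapses, since it rests on exactly the per-route alignment you have not established.

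The paper sidesteps all of this by arguing with covering sets rather than pointwise weights. Every route covering the optimal subsegment inside $s_2$ intersects $s_2$, so $s_{\mathit{opt}}.C\subseteq s_2.I\subseteq s_1.C$; optimality of $s_{\mathit{opt}}$ then forces $s_{\mathit{opt}}.C=s_1.C$, and from this equality of covering sets the paper concludes directly that $s_1$ is itself an optimal subsegment of $s_{\mathit{opt}}$. No special point $p_1$ and no per-route weight comparison are needed.
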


\begin{proof}
Let the optimal segment be $s_{\mathit{opt}}$,
and let $s_2$ contains an optimal subsegment of $s_{\mathit{opt}}$.
Then we have $s_{\mathit{opt}}.C \subseteq s_2.I$ because
every route that contain the optimal subsegment must intersect with $s_2$.

Since $s_2.I \subseteq s_1.C$, we have $s_{\mathit{opt}}.C \subseteq s_1.C$.
By the definitions of segment score and optimality,
we also have $s_{\mathit{opt}}.C = s_1.C$.
Therefore, by the definition of segment score,
$s_1$ is also an optimal subsegment of $s_{\mathit{opt}}$.
\end{proof}

Once the result set is not empty,
Lemma~\ref{lem:prune2} allows us to prune segments
that lead to the same optimal segment.
We study the effectiveness of the pruning strategies in the experimental evaluation.

Figure~\ref{fig:ub_lb} shows the edge $e_{2,3}$ from Figure~\ref{fig:ex_1}.
It illustrates the calculation of segment score upper bound and lower bound.
The edge $e_{2,3}$ has a facility $f_1$ built on it,
resulting in two subsegments, $s_1$ from the beginning to $f_1$ and $s_2$ from $f_1$ to the end.
The routes $r_1$ and $r_3$ are attracted by one facility,
whereas $r_2$ is attracted by two facilities.
We show how to calculate the score upper and lower bounds for both $s_1$ and $s_2$.
Segment $s_1$ is intersected
by $r_1$ and $r_3$, and contained by $r_2$.
Therefore,
$s_1.\mathit{lb} = \frac{1}{3}{\mathit{score}(r_2)}$,
$s_1.\mathit{ub} = \frac{1}{2}\mathit{score}(r_1)+ \frac{1}{3}\mathit{score}(r_2) + \frac{1}{2}\mathit{score}(r_3)$

\begin{figure}[ht]
  \centering
    \includegraphics[width=0.6\textwidth]{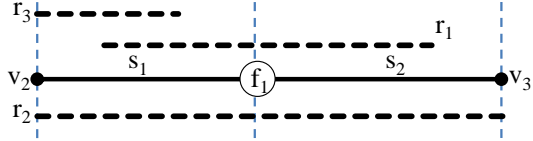}\hfill
  \caption{Segment Upper and Lower Bound}
  \label{fig:ub_lb}
\end{figure}

Similarly, $s_2$ is intersected by $r_1$,
and contained by $r_2$. Therefore,
$s_2.\mathit{lb} = \frac{1}{3}\mathit{score}(r_2)$,
$_2.\mathit{ub} = \frac{1}{2}\mathit{score}(r_1)+ \frac{1}{3}\mathit{score}(r_2)$

Algorithm~\ref{alg:findSeg} shows the pseudo-code of the ITE algorithm.
This algorithm uses a priority queue $Q$
that is sorted on the upper bound score of every segment.
A variable called $\mathit{maxLb}$ is used to keep track
of the maximum lower bound score seen so far.

\begin{algorithm}[ht]
\SetKwData{Left}{left} \SetKwData{This}{this} \SetKwData{Up}{up}
\SetKwFunction{Union}{Union}
\SetKwFunction{FindCompress}{FindCompress} \SetKwInOut{Input}{input}
\SetKwInOut{Output}{output}
\SetKwIF{If}{ElseIf}{Else}{if}{then}{else if}{else}{endif}
\SetKwFor{For}{for}{do}{endfor} \SetKwFor{While}{while}{do}{endw}
\SetKwFor{ForEach}{foreach}{do}{endfch}
\caption{ITE($G, R, F, \delta, \beta, M$) }
\label{alg:findSeg}
Init. $e\in G.E$ s.t. $e.\mathit{lb} \leftarrow 0$ and $e.\mathit{ub} \leftarrow \sum_{r_i\in e.R_c}{w_M(j_i, k_i)\mathit{score}(r_i)}$;  \\
$S \leftarrow \emptyset$;
$Q.\mathit{enqueue}(G.E)$;
$\mathit{maxLb} \leftarrow 0$; \\
\While{$Q \neq \emptyset$}{
    $\mathit{currSeg} \leftarrow Q.\mathit{dequeue}$; \\
    $\mathit{split} \leftarrow \mathit{false}$; \\
    \If{$\mathit{currSeg}.\mathit{ub} > \mathit{maxLb}$ }{
        $\mathit{split} \leftarrow \mathit{true}$;
    }\ElseIf{$\mathit{currSeg}.\mathit{ub} = \mathit{maxLb}$ }{
        \If{$\mathit{currSeg}.\mathit{ub} = \mathit{currSeg}.\mathit{lb}$}{
            $S.\mathit{add}.(\mathit{currSeg})$;
        }\ElseIf{$\nexists s \in S$ such that $\mathit{currSeg}.I \subseteq s.C$}{
            $\mathit{split} \leftarrow \mathit{true}$;
        }
    }
    \If{$\mathit{split}$}{
         $\mathit{ss} \leftarrow \mathrm{SplitSegment}(\mathit{currSeg}, \beta)$;
    }
    \ForEach{$s \in \mathit{ss}$}{
        \ForEach{$\mathit{r} \in \mathit{currSeg}.I$}{
            \If{$\mathrm{intersects}(\mathit{r}, s)$}{
                $s.I \leftarrow s.I.\mathrm{add}(\mathit{r}\})$; \\
                $s.\mathit{ub} \leftarrow s.\mathit{ub} + w_M(j, k)\mathit{score}(\mathit{r})$; \\
            }
            \If{$\mathrm{contains}(\mathit{r}, s)$ }{
                $s.C \leftarrow e.C.\mathrm{add}(\mathit{r})$; \\
                $s.\mathit{lb} \leftarrow s.\mathit{lb} + w_M(j, k)\mathit{score}(\mathit{r})$;
             }
        }
        \If{$s.\mathit{lb} > \mathit{maxLb}$}{
           $\mathit{maxLb} \leftarrow s.\mathit{lb}$;
        }
         $Q.\mathit{enqueue}(s)$;
    }
}
\ForEach{$s \in S$}{
    Find the entire optimal segment of $s$ by overlapping
    the route usage objects $\mathit{r} \in s.C$ one by one.
}
\end{algorithm}

First, ITE initializes the edges such that each has a lower bound
score $0$ and an upper bound score computed as in Definition~\ref{def:lb_ub}
(line~1). It also initializes the result set $S$, enqueues the edges
$G.E$ of the road network graph, and initializes variable
$\mathit{maxLb}$ (line~2). It then enters the loop and pops out the
top element from $Q$ (lines~3--4). The flag variable $\mathit{split}$,
indicating whether or not the current segment needs to be partitioned,
is set to false at the beginning of each iteration (line~5). Next, if
the upper bound score of $\mathit{currSeg}$ exceeds $\mathit{maxLb}$
then it needs to be further partitioned, so $\mathit{split}$ is set to
true (lines~6--7).  If the upper bound score of $\mathit{currSeg}$ is
equal to $\mathit{maxLb}$, we have found a result segment if the upper
and lower bound scores are the same. Then $\mathit{currSeg}$ is added
to the result set (lines~8--10). However, if the upper and lower bound
scores differ, ITE tests whether $\mathit{currSeg}$ might lead to an
optimal subsegment of a new optimal segment that is not seen before.
Then ITE checks if there is a result $s$ in $S$ such that $s.C$ is
subset of $\mathit{currSeg}.I$ (see Lemma~\ref{lem:prune2}).  If no,
$\mathit{split}$ is set to true (lines~11--12).

If $\mathit{split}$ is true, the function partitions $\mathit{currSeg}$
into subsegments with the procedure $\mathrm{SplitSegment}$,
which partitions a segment $G$ into $\beta$ equal length subsegments (lines~13--14).
Here $\beta$ is a tunable parameter.
In the experimental studies, we show the effect of $\beta$.

The intersection set, contain set, and lower and upper bound scores for each segment
output by $\mathrm{SplitSegment}$ are computed and inserted into $Q$ (lines~15--22). Next, $\mathit{maxLb}$ is updated if the subsegment has a higher lower bound score (lines~23--24).
Then these subsegments are added back to $Q$ (line~25).

Upon exiting the loop, each optimal segment
is extended to its full length
by overlapping the routes that contribute scores to the segment (lines~26--27).

\begin{figure}[ht]
\centering
  \subfigure[]
{\label{fig:ite_part1}\includegraphics[width=0.48\textwidth]{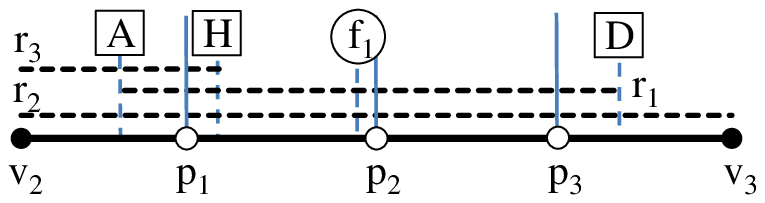}}\hfill
  \subfigure[]
  {\label{fig:ite_part2}\includegraphics[width=0.48\textwidth]{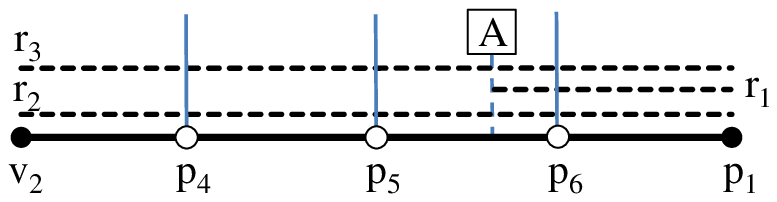}}\hfill
  \subfigure[]
  {\label{fig:ite_part3}\includegraphics[width=0.48\textwidth]{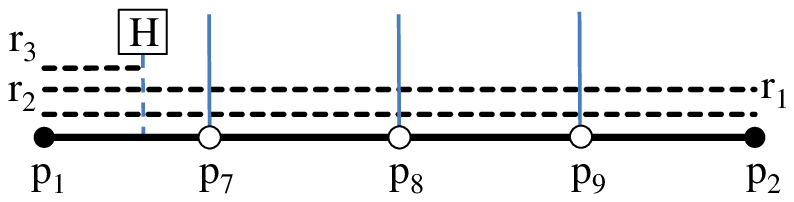}}\hfill
  \subfigure[]
  {\label{fig:ite_part4}\includegraphics[width=0.48\textwidth]{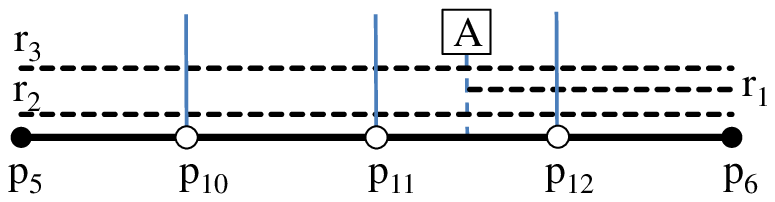}}\hfill
  \caption{ITE Execution Example}
   \label{fig:ite_part}
\end{figure}

We continue to use Figure~\ref{fig:ex_1} to illustrate the execution of
Algorithm~\ref{alg:findSeg}.
We show the iterative partitioning of $e_{2, 3}$ in Figure~\ref{fig:ite_part}.
Table~\ref{tbl:execution} shows the top entries of the queue obtained from partitioning $e_{2,3}$,
together with their upper and lower bound scores during the execution of ITE.
Double lines separate iterations. The segment at the top of the queue is in bold.

\begin{table}[ht]
\begin{center}
   \begin{tabular}{c|c|c||c|c|c}
    \hline
    Segment & $ub$ & $lb$ & Segment & $ub$ & $lb$\\ \hline
    \textbf{$\overline{v_2p_1}$} & \textbf{11} & \textbf{7} & $\overline{p_2p_3}$ & 8 & 8 \\ \hline
    $\overline{p_1p_2}$ & 11 & 8 & $\overline{p_3v_3}$ & 8 & 4   \\ \hline
     \hline
     \textbf{$\overline{p_1p_2}$} & \textbf{11} & \textbf{8} & $\overline{p_2p_3}$ & 8 & 8\\ \hline
     $\overline{p_5p_6}$ & 11 & 8 & $\overline{p_3v_3}$ & 8 & 4 \\ \hline
     $\overline{p_6p_1}$ & 11 & 11 & $\cdots$ \\ \hline
     \hline
     \textbf{$\overline{p_5p_6}$} & \textbf{11} & \textbf{8} & $\overline{p_2p_3}$ & 8 & 8 \\ \hline
     $\overline{p_6p_1}$ & 11 & 11 & $\overline{p_3v_3}$ & 8 & 4 \\ \hline
     $\overline{p_1p_7}$ & 11 & 8 & $\cdots$ \\ \hline
     \hline
     \textbf{$\overline{p_6p_1}$} & \textbf{11} & \textbf{11} & $\overline{p_{12}p_6}$ & 11 & 11 \\ \hline
     $\overline{p_1p_7}$ & 11 & 8 & $\overline{p_2p_3}$ & 8 & 8 \\ \hline
     $\overline{p_{11}p_{12}}$ & 11 & 7 & $\cdots$ \\ \hline
    \end{tabular}
    \end{center}
    \vspace{-0.4cm}
    \caption{ITE Execution Example}
  \label{tbl:execution}
\end{table}

Below, we calculate the upper and lower bound scores of
$\overline{v_2p_1}$, which is intersected with $r_1$, $r_2$, and $r_3$,
and contained by $r_2$ and $r_3$.
Therefore, $\mathit{ub}(\overline{v_2p_1})=\sum_{i=1}^{3}{\mathit{score}(r_i)}{k_i}=\frac{8}{2} + \frac{12}{3} + \frac{6}{2}=11$
and $\mathit{lb}(\overline{v_2p_1}) = \sum_{i=2}^{3}{\mathit{score}(r_i)}{k_i}=\frac{12}{3} + \frac{6}{2}=7$.
The upper and lower bound scores of other segments can be computed in a similar way.

Since segment $\overline{v_2p_1}$ has the largest upper bound score and
its upper bound is not the same as the lower bound,
it is split as shown in Figure~\ref{fig:ite_part2}.
The upper and lower bound scores of the subsegments are also computed.
Now $\mathit{maxLb} = 11$

In the next iteration, segment $\overline{p_1p_2}$
has the largest upper bound score.
But still, its upper bound is not the same as its lower bound.
It is then split (Figure~\ref{fig:ite_part3}).
The upper and lower bound scores of
the subsegments are also computed, and
$\mathit{maxLb} = 11$.

The next segment under examination is $\overline{p_5p_6}$,
which is split again because its upper bound is different from
its lower bound (Figure~\ref{fig:ite_part4}). Still, $\mathit{maxLb} = 11$.

The next segment under examination is $\overline{p_6p_1}$,
whose upper and lower bounds are the same.
The upper bound of $\overline{p_6p_1}$ is also same
as $\mathit{maxLb}$.
Therefore, $\overline{p_6p_1}$ is added into the result
set as an optimal subsegment.

The process continues until an optimal subsegment of
every optimal segment in the network graph is found.
$Q$ is updated at the end of each iteration.
Note that ITE does not need to examine
those segments with score upper bound less than
$\mathit{maxLb}$ (11 in this case), resulting
in a substantial reduction of the search space.

In the end, the entire
optimal segment $\overline{\mathit{AH}}$ can be found by overlapping the
routes $\overline{\mathit{AH}}.C$, $r_1$, $r_2$, and $r_3$.

\subsection{Analysis}
\label{sec:theoretical_analysis}
We consider the correctness and completeness of ITE and analyze its time complexity.

The correctness of ITE depends on finding the subsegments
with the maximum score correctly. Here, we prove that
the algorithm terminates and returns subsegments that have
the maximum score. We must show that after a
finite number of iterations, ITE produces a subsegment $s$ such that $s.\mathit{ub} = s.\mathit{lb}$ where
$s.\mathit{ub}$ is the maximum score among all the subsegments.
First, we know that when $s.\mathit{ub} = s.\mathit{lb}$, $s$ is a consistent segment with the score $s.\mathit{ub}$.
Since $s.\mathit{ub}$ is the maximum among all the subsegments
ensured by the property of the priority queue, $s$ is a subsegment with the
maximum score.
Since ITE always examines the subsegment with the maximal $s.\mathit{ub}$,
we only need to show that ITE terminates.
This can be shown by the following
properties. (1) The maximum $\mathit{ub}$ value decreases and  (2) The maximum $\mathit{lb}$ increases.
(3) The maximum $\mathit{ub}$ and $\mathit{lb}$ values converge to the same value after a number of iterations.

Next, to prove completeness, we show that for each optimal
segment, ITE is able to find a subsegment with the maximum score
that is contained within the optimal segment. Let $s_i$ and $s_j$
be the subsegment of two distinct optimal segments.
Without loss of generality, suppose ITE has found $s_i$.
We show that ITE also finds $s_j$ instead of pruning it.
Recall that ITE uses two pruning criteria to prune a
subsegment.
The first criterion says that $s_j$ can be pruned if $s_i.\mathit{lb} > s_j.\mathit{ub}$.
Since both $s_i$ and $s_j$ are subsegments of optimal segments with the same optimal score $\mathrm{OPT}$,
we have $s_j.\mathit{ub} \geq \mathrm{OPT} \geq s_i.\mathit{lb}$. Therefore, this pruning criterion does not apply.
The second criterion states that $s_j$ can be pruned if $s_j.I \subseteq s_i.C$.
Since $s_i$ and $s_j$ are subsegments of different optimal segments, $s_j.C \nsubseteq s_i.C$.
We also know that $s_j$ is a subsegment with the maximum score, hence $s_j.I = s_j.C$.
Putting them together, we have $s_j.I = s_j.C \nsubseteq s_i.C$.
Hence the second pruning criterion also do not apply.
Thus, ITE does not prune $s_j$, but detects it as a part of an entire segment which is also found.
Therefore, ITE will find all the optimal segments.

\begin{theorem}
The time complexity of ITE is $O((\mathit{log}|R|+|S|)|R|)$.
\end{theorem}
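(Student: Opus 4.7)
The plan is to decompose the cost of ITE into three phases---the initialization on line~1 (together with the initial bulk enqueue on line~2), the main \texttt{while} loop on lines~3--25, and the final extension phase on lines~26--27---bound each separately, and then sum. I will charge the work inside each iteration to three pieces: the priority-queue operations, the inner loop over $\mathit{currSeg}.I$ that updates intersection/containment sets and bounds, and the occasional addition of a segment to $S$.

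For the queue itself, I would first argue that at any time $Q$ holds only segments whose upper bound is at least $\mathit{maxLb}$ (segments falling below are effectively pruned by Lemma~\ref{lem:prune1} on the next pop), and that each segment in $Q$ must intersect at least one route. By charging each live segment to the route endpoint or facility that last split it, I can bound $|Q| = O(|R|)$; hence each \emph{enqueue}/\emph{dequeue} costs $O(\log|R|)$. Next I need the iteration-count bound, which is the main obstacle. I would use a potential-function argument: the ``gap'' $\mathit{maxUb}-\mathit{maxLb}$ is monotone non-increasing over iterations, and every split either (a) strictly narrows some $w_M(j,k)\mathit{score}(r_i)$ contribution between a segment's $\mathit{lb}$ and $\mathit{ub}$ for a route $r_i$ whose role was previously ``intersect but not contain,'' or (b) creates a child that is immediately added to $S$ or pruned by Lemma~\ref{lem:prune2}. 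Since each route $r_i$ can transition from $\mathit{I}$ to $\mathit{C}$ for a given segment at most a logarithmic number of times before the containing segment is resolved, the total number of iterations amortizes to $O(|R|)$, and the total inner-loop work amortizes to $O(|R|\log|R|)$.

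For the per-iteration inner work (lines~15--22), computing the new $s.I$, $s.C$, $s.\mathit{lb}$, and $s.\mathit{ub}$ for each of the $\beta$ children involves at most $|\mathit{currSeg}.I| \le |R|$ intersection/containment tests against routes retrieved in constant time from the preprocessing lookup, so one iteration is $O(|R| + \beta \log|R|)$; treating $\beta$ as a constant, the main loop totals $O(|R|(\log|R| + \log|R|)) = O(|R|\log|R|)$. For the extension phase on lines~26--27, each $s \in S$ is extended by overlapping the routes in $s.C$, which costs $O(|s.C|) = O(|R|)$ per optimal subsegment, so the total extension cost is $O(|S|\cdot|R|)$.

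Summing the three contributions yields $O(|R|\log|R| + |S|\cdot|R|) = O((\log|R| + |S|)|R|)$, as claimed. The delicate step---and the one I expect to need the most care---is the iteration-count amortization: making precise why the iterative bisection converges in $O(|R|)$ effective iterations rather than depending on a geometric parameter of the road network. I would lean on the observation that the preprocessing table already discretizes the relevant ``breakpoints'' (route endpoints and facility locations), so the meaningful partition positions of any edge are a priori $O(|R| + |F|/|E|)$ in number, which caps the recursion depth at $O(\log|R|)$ per original edge and gives the required amortization.
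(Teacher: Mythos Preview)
Your decomposition into three phases matches the paper's, and your bound on the extension phase ($O(|S|\cdot|R|)$) is exactly what the paper does. The disagreement is entirely in how the main \texttt{while} loop is bounded, and there your argument has a genuine arithmetic gap.

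You claim the number of iterations amortizes to $O(|R|)$ and separately that each iteration costs $O(|R|+\beta\log|R|)$. Multiplying these gives $O(|R|^2)$, not $O(|R|\log|R|)$. The sentence ``the main loop totals $O(|R|(\log|R|+\log|R|))$'' does not follow from the two ingredients you just established; you would need either the iteration count to be $O(\log|R|)$ or the amortized per-iteration cost to be $O(\log|R|)$, and neither is what you argued. The potential-function sketch (``each route transitions from $I$ to $C$ at most a logarithmic number of times'') is suggestive but does not by itself cap the \emph{total} inner-loop work at $O(|R|\log|R|)$, because you still scan all of $\mathit{currSeg}.I$ on every split, not just the routes that transition.

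The paper avoids this entirely by taking a much coarser view: it treats the successive splits as building a tree of fan-out~$\beta$ and simply asserts that the number of splits is the \emph{height} of that tree, which is $O(\log|R|)$ (citing the quadtree analysis of Flajolet et al.\ when $\beta=4$). With only $O(\log|R|)$ iterations and $O(|R|)$ work per iteration, the while loop is $O(|R|\log|R|)$ directly---no amortization, no bound on $|Q|$, no potential function. Your closing remark about recursion depth being $O(\log|R|)$ per edge is actually the paper's key idea; if you had led with that as the iteration count (rather than $O(|R|)$), the rest of your analysis would have gone through cleanly.
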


\begin{proof}
%

In the priority queue operations, ITE iteratively splits
the segment with the maximal score upper bound.
The number of splits corresponds to the height of the tree with fan-out $\beta$.
If $\beta=4$, we get a quadtree.
According to~\cite{FGP93}, the asymptotic height of the quadtree is $\mathit{log}|R|$.
For each subsegment $s$, ITE uses a loop to find its intersection set $s.I$ and contain set $s.C$.
We have $\mathit{ss}.I \subseteq R$, so the time complexity of the loop is $O(|R|)$.
Thus, time complexity of the while loop is $O(|R|\mathit{log}|R|)$.

The second loop depends on the size of the result set $S$.
Since $s.C \subseteq R$, the time complexity of this loop is
$O(|R||S|)$.

In total, the time complexity of ITE is $O((\mathit{log}|R|+|S|)|R|)$.
\end{proof}

\section{Experimental Study}
\label{sec:expt}
This section reports on empirical studies that aim to elicit design properties of the proposed framework and,
in particular, of the AUG and ITE algorithms.
The studies use a real spatial network and real facility and trajectory data, as well as synthetic data.

The experiments covered in this section were performed on an Intel Xeon (2.66Ghz) quad-core machine with 8 GB of main memory running Linux (kernel version 2.6.18).
Both of the algorithms were implemented in Java.
Every instantiation of JVM was allocated 2 GB of virtual memory.
We first describe the data used in the experiment as well as the parameter settings.
Then we cover experiments that target different aspects of the algorithms.

\subsection{Data Sets and Parameter Settings}
\subsubsection{Road Network}

The digital road network TOP10DK~\footnote{http://tinyurl.com/bqtgh2g} was used for our experiments.
It contains all of Denmark at a fine granularity.

To construct the road network graph, we first identify the vertices.
An edge exists between two vertices $v_1$ and $v_2$ as long as there exists a road segment connecting $v_1$ and $v_2$.
In total, the graph contains 465,057 vertices and 920,218 edges.

In order to study the performance of the algorithms thoroughly,
we used a real-world data set and a synthetic data set.
Each data set contains a collection of routes (GPS recordings received from drivers) and a collection of facilities.
Both data sets share the same underlying road network.

\subsubsection{Route Data Preparation}
We obtained the real route traversal data set from the "Pay as You Speed" project \cite{LAT07}~\footnote{http://www.trafikdage.dk/td/papers/papers07/tdpaper27.pdf}.
The data set is obtained from vehicles driving in North Jutland, Denmark.
The data set contains 39,688,695 GPS points
produced by 151 different drivers in the period from October 1, 2007 to January 31, 2008.
In this data set, each route is represented by a sequence of GPS points that may deviate from the underlying road network.
To solve this problem, we use an existing technique
by Tradi\v{s}auskas et al.~\cite{TJL07} to map-match the route data onto the underlying road network.
Then the sequence of traversed edges has also to be determined because two consecutive GPS points may
be matched to different edges.
To achieve it, we use a bidirectional Dijkstra's algorithm provided by Pohl~\cite{pohl1971}.

In addition, stationary points, when reported GPS locations are the same for consecutive time points for the same user, are removed.
Further, different trips of users were identified from the set of GPS recordings.
We distinguish a new route when the time period between two consecutive GPS points is more than 3 minutes.
In total, we obtain 51,146 routes.
The median number of GPS points of the routes is 488.
The median length of the routes in the real data sets is 6524.81.

We generate synthetic routes by simulating the movement of a vehicle that emits GPS points with a fixed frequency (e.g., 0.1 Hz).
The length of the each route is thus the speed of the car times the number of GPS points  it emits.
In the simulation, routes are allowed to have variable lengths.
So when starting a new synthetic route, we first generate a random number between 480 and 520 for the number of GPS points.
We use 480 and 520 because the median number of GPS points of the routes in the real data set is 488.
Then we randomly select a network point to start a new route.
When taking the next point, we follow the graph and traverse to
the next edge (randomly pick one if more than one outgoing edge exists).
The sampling frequency is fixed for one data set to simulate a real life application.
We then vary the sampling frequency, resulting in three different data sets, i.e.,
short, medium, and long, with the median lengths of routes being 3405.76, 8030.42, and 12890.12, respectively.

In both the real and synthetic data sets, for each route, we use a random number generator to generate the user count and route usage randomly from 1 to 20.

\subsubsection{Facilities}
The facility data set contains 16,577 places of interest located throughout Denmark.
The exact address of each facility can be looked up from yellow pages.
Since it is meaningless to take businesses of different types,
we group the facilities according to their types (e.g., fast food, salon, supermarket).
In all the experiments below, the facilities are of the same type.
When generating the synthetic facility data set, we randomly pick network points from the network.
Every facility in either the real data set or the synthetic attracts at least one route.

Statistics on the data sets and the settings for key parameters are summarized in Table~\ref{tbl:setting_real}.
The default values are in bold.

\subsubsection{Scoring Function and Score Distribution Model}
We observe from the experiments that
the scoring function and the score distribution model do not affect the performance of the two algorithms.
Therefore, we only show the experimental results produced when using the first scoring function and the first proposed model.

\begin{table}[h]
\label{tbl:setting}
\begin{center}
    \begin{tabular}{ | c | c |}
    \hline
    \textbf{Parameter} & \textbf{Range} \\ \hline
    $\delta$ & 0.02, 0.04, \textbf{0.06}, $\ldots$, 0.12 \\ \hline
    $\beta$ & 2, 3, \textbf{4}, 5, 6  \\ \hline
    Num Routes in Real Data & 5k, 10k, $\ldots$, \textbf{25k} \\ \hline
    Num Routes in Synthetic Data & 10k, 15k, $\ldots$, \textbf{30k} \\ \hline
    Num Facilities & 600, 800, \textbf{1k}, $\ldots$, 1.4k\\ \hline
    \end{tabular}
    \end{center}
    \vspace{-0.3cm}
    \caption{Experimental Settings}
  \label{tbl:setting_real}
\end{table}

\subsection{Effect of $\delta$}
Recall that a facility attracts a route if their distance is no further than $\delta$.
Figure~\ref{fig:delta} shows the performance and optimal scores when varying $\delta$ on real data.

\begin{figure}[ht]
\vspace{-0.2cm}
  \centering
  \subfigure[Performance vs $\delta$, Real ]
  {\label{fig:run_time_delta}\includegraphics[width=0.45\textwidth]{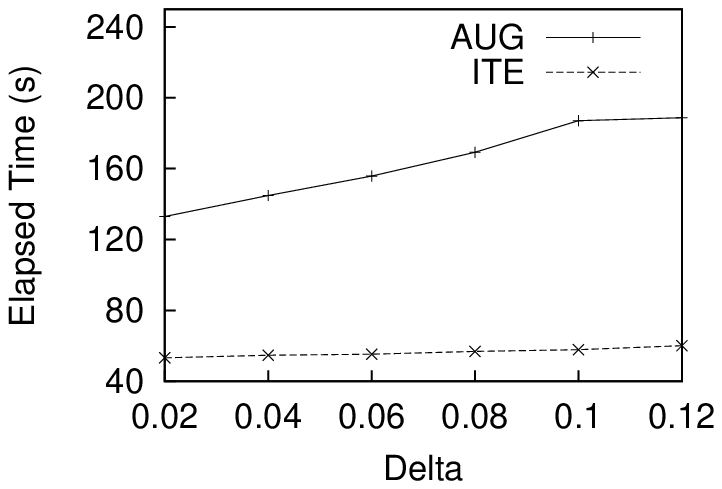}}\hfill
  \subfigure[Score vs $\delta$, Real] {\label{fig:scoreDelta}\includegraphics[width=0.45\textwidth]{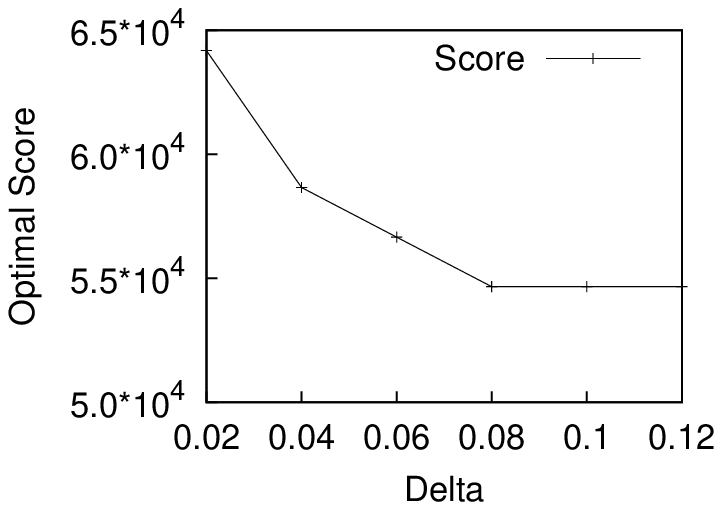}}\hfill
\vspace{-0.2cm}
  \caption{Effect of $\delta$}
    \label{fig:delta}
\vspace{-0.1cm}
\end{figure}
Although the running times for both algorithms increase when $\delta$ increases (Figure~\ref{fig:run_time_delta}),
the two algorithms exhibit different patterns.
When $\delta$ increases from
0.02 to 0.1, AUG increases much faster than ITE.
AUG has to explore further on the edges to find the attracting
facilities for each route traversal, in order to decide whether to include them in the augmented graph.
This may be the reason why AUG increases more rapidly than ITE.
When $\delta$ increases from 0.1 to 0.12, the running time of AUG increases slower.
The reason may be that less facilities are taken into account.
In reality, some facilities prefer locations near the junctions,
so the density of facilities in the middle of roads might be less.
The increase of the running time of ITE is less, and there is no
sudden change, indicating that $\delta$ has little effect on ITE.

Since the optimal scores output by both algorithms are
the same, we plot one figure to
show the effect of $\delta$ (Figure~\ref{fig:scoreDelta}).
The optimal scores decrease like a staircase.
The reason is that increasing $\delta$ may increase the number of
attracting facilities for a route, resulting in decreased scores of segments received from
the routes according to the score distribution model.

\subsection{Effect of $\beta$}
Recall that $\beta$ is the number of subsegments produced when a segment is partitioned.
It is a user-specified parameter.
Figure~\ref{fig:beta} shows the effect of $\beta$ on the running time of ITE.

\begin{figure}[ht]
  \centering
    \includegraphics[width=0.45\textwidth]{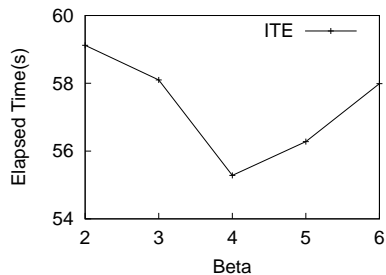}
\vspace{-0.2cm}
  \caption{Effect of $\beta$, Real}
  \label{fig:beta}
\vspace{-0.1cm}
\end{figure}

Initially, as $\beta$ value increases, the running time decreases.
However, beyond a certain $\beta$ value (4 in the figure), with further increase in the next value, the running time starts to increase.
The best performance of ITE occurs when $\beta=4$.
When the $\beta$ value is smaller than 4, the ``zooming-into" an optimal subsegment may not be as fast as when $\beta=4$.
On the other hand, when the $\beta$ value is greater than 4, computing the lower and upper
bounds of the subsegments can take substantial time, and thus the increase in running time.

\subsection{Effect of the Number of Routes}
Figure~\ref{fig:perf_num_routes} shows the performance when varying the number of routes using real and synthetic data.
Algorithms AUG and ITE perform equally well for a small number (5k) of routes.
But the running time of AUG grows much more rapidly than that of ITE with the increase of the quantity of routes.
This is expected from the  time complexity analysis of AUG and ITE.

\begin{figure}[ht]
  \subfigure[Performance vs. Num Routes, Real]
  {\label{fig:perf_routes_real}\includegraphics[width=0.45\textwidth]{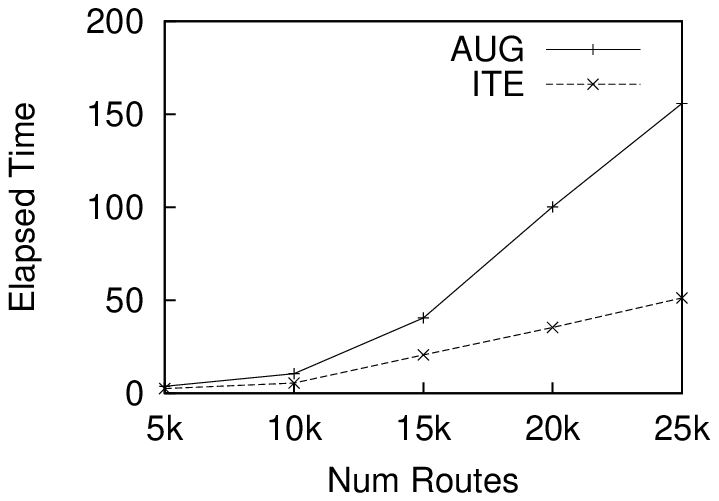}}\hfill
  \subfigure[Performance vs. Num Routes, Synthetic]
  {\label{fig:perf_routes_syn}\includegraphics[width=0.45\textwidth]{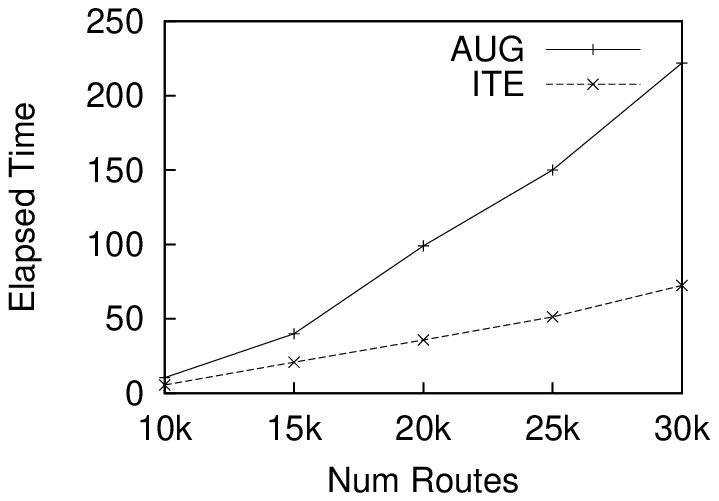}}\hfill
  \vspace{-0.2cm}
  \caption{Effect of the Number of Routes on Performance}
   \label{fig:perf_num_routes}
\vspace{-0.1cm}
\end{figure}

\subsection{Effect of Route Length}

In this set of experiments, we study the effect of the length of routes
on the performances of both algorithms.
The three data sets used in the experiments are explained above.
Figure~\ref{fig:route_length} shows the results.
For both algorithms, more time is needed for longer routes
when the number of route traversals ranges from 5k to 25k.
Again, the running time of AUG increases faster than that of ITE.

\begin{figure}[ht]
  \centering
    \includegraphics[width=0.45\textwidth]{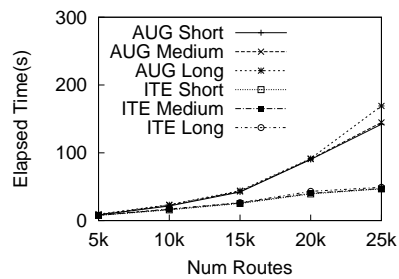}
\vspace{-0.2cm}
  \caption{Effect of Route Length}
  \label{fig:route_length}
\vspace{-0.1cm}
\end{figure}

For AUG, computing the Attraction List for the
vertices takes longer time as each route covers more vertices on average.
For ITE, each edge intersects more routes on average.
So after splitting a segment, more routes have to be
examined to calculate the lower and upper bound scores
of the subsegments, resulting in longer running time.

\subsection{Effect of the Number of Facilities}

Figure~\ref{fig:num_f} shows the running time of AUG and ITE when
varying the number of facilities.

\begin{figure}[ht]
  \subfigure[Performance vs. Num Facilities, Real]
  {\label{fig:perf_fac_real}\includegraphics[width=0.45\textwidth]{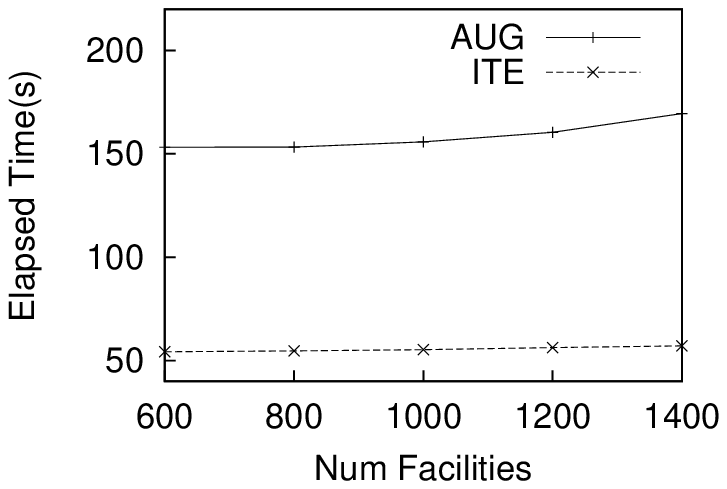}}\hfill
  \subfigure[Performance vs. Num Facilities, Synthetic]
  {\label{fig:perf_fac_syn}\includegraphics[width=0.45\textwidth]{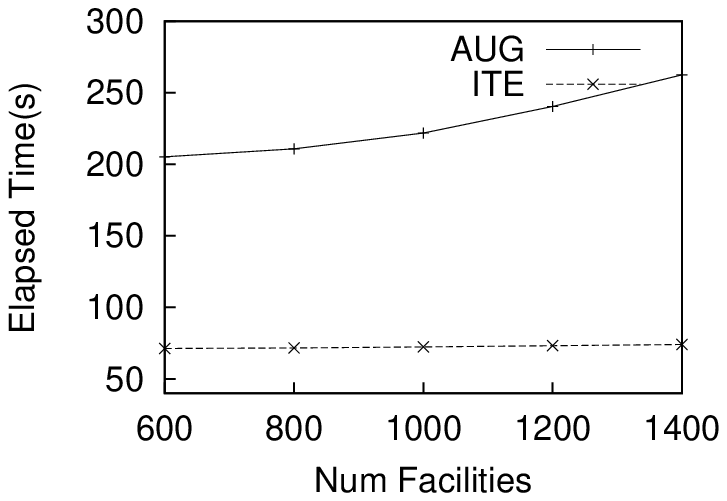}}\hfill
  \vspace{-0.2cm}
  \caption{Effect of the Number of Facilities}
  \label{fig:num_f}
\vspace{-0.1cm}
\end{figure}

For both kinds of routes, AUG is affected slightly more by the
increase in facilities.  The reason is that in AUG, facilities have to
be augmented, and then attraction lists have to be calculated for
them, resulting in substantial computation. In contrast, facilities
cause little computation in ITE. When ITE partitions the segments, no
house-keeping is necessary for facilities. It just needs to adjust the
relative positions if the facilities according to the newly produced
segments.

\subsection{Effectiveness of Pruning Strategies}
In this set of experiments, we study the effectiveness of the
pruning strategies in ITE by keeping track of the number
of segments generated, partitioned, and pruned in the course
of finding the optimal segments.
Figure~\ref{fig:num_prune} shows the respective segments
generated, split, and pruned by Lemma~\ref{lem:prune1} and
Lemma~\ref{lem:prune2} when running ITE with default settings.
Label ``total'' means the total number of generated subsegments,
``splits" is the number of subsegments that needs further splitting,
``prune1" is the number of subsegments that are pruned using Lemma~\ref{lem:prune1}, and ``prune2" is the number of subsegments that
are pruned using Lemma~\ref{lem:prune2}.

\begin{figure}[ht]
\vspace{-0.2cm}
  \subfigure[Real]
  {\label{fig:prune_real}\includegraphics[width=0.45\textwidth]{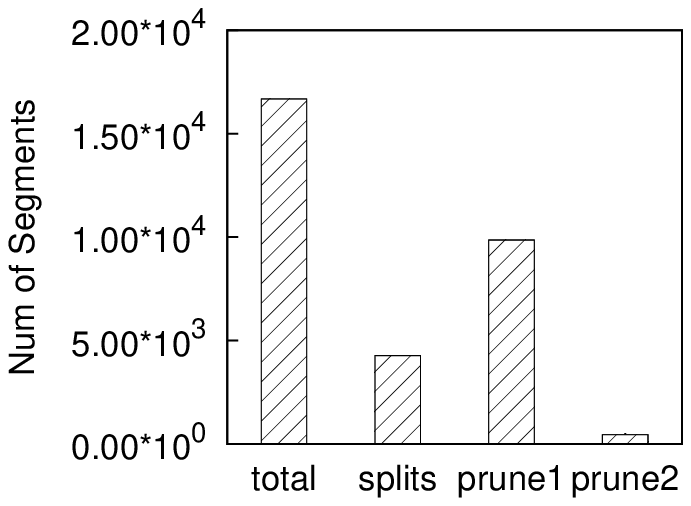}}\hfill
  \subfigure[Synthetic]
{\label{fig:prune_syn}\includegraphics[width=0.45\textwidth]{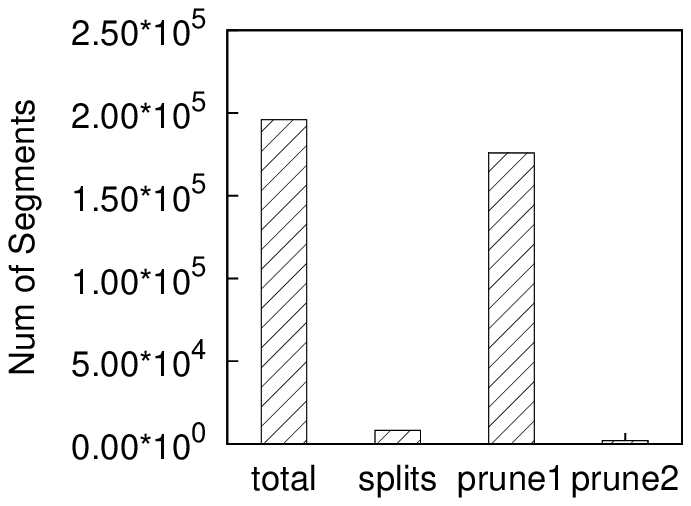}}\hfill
\vspace{-0.2cm}
  \caption{Effect of Pruning Strategies}
    \label{fig:num_prune}
\vspace{-0.2cm}
\end{figure}

It is observed that in the real data set the number of segments that require further splitting is 4,273, which is approximately
25\% of the number of total segments, whereas the number is between 5\% and 10\% in the synthetic data set.

In the real data set, almost 60\% of the generated segments
that cannot contain an optimal segment are been
pruned by Lemma~\ref{lem:prune1}.
In contrast, Lemma~\ref{lem:prune2} prunes 2.7\% of
the total segments.

In the synthetic data set where route traversals are generated more evenly
throughout the entire map, Lemma~\ref{lem:prune1} prunes almost
10\% of the total generated segments.
Lemma~\ref{lem:prune2} prunes around 1\% of the total segments.

\section{Related Work}
\label{sec:related}

The paper's study relates to two previously studied problems, the
facility location problem~(FLP) and flow intercepting facility
location problem~(FIFLP), which we cover in turn.

\subsection{Facility Location Problem}
The classical facility location problem~\cite{CDL05,FH09,M01,NP05,F06}
takes as input a finite set $C$ of customer locations and a finite set
$P$ of candidate facility locations, and it returns $k$ ($k > 0$)
facilities in $P$ that optimizes a predefined metric.

The \emph{single facility location} problem~\cite{FH09,NP05} finds one
location in $P$ that optimizes a predefined metric with respect to a
set $C$ of customer locations. It assumes that no facility has been
built previously; in contrast, our optimal segment problem permits the
presence of a set $F$ of existing facilities.

The \emph{online facility location} problem~\cite{F06,M01} assumes a
dynamic setting, where (i) the set $C$ of customers is initially
empty, and (ii) new customers may be inserted into $C$ as time
evolves.  The solution to this problem constructs facilities one at a
time, such that its quality (with respect to some metric) is
competitive in comparison to solutions that are given all customer
points in advance. This problem assumes that the set $P$ of candidate
facility locations is finite, while our optimal segment problem does
not.

Many works~\cite{DZX05,WOY09,XYL10,ZDX06,ZWL10,CDL05} study another
variant of the facility location problem, the so-called the
\emph{optimal location} (OL) \emph{problem}, where only the optimal
locations are returned from an \emph{infinite} number of candidate
locations, given a finite set of preexisting facilities $F$. The
problem is studied in $L_p$ space.  Recently, Xiao et al.~\cite{XYL10}
extends the problem to a spatial network setting, using network
distance in place of $L_p$ distance. Our optimal segment problem is
related to the OL query, but uses route traversals instead of static
customer point locations. The techniques presented in these previous
works cannot be applied to solve the optimal segment problem.

\vspace{-0.2cm}
\subsection{Flow Intercepting Facility Location}
The \emph{flow intercepting facility location} (FIFL) \emph{ problem}
is similar to our problem in that it models demand by means of
customer flows. Here, customer trips are pre-planned, and customers
can choose to visit a facility or not during their trips by deviating
from a pre-planned route.

Hodgson~\cite{H81,H90} was the first to identify and study an
FIFL-type problem where the placement of facilities minimizes the
total deviation from preplanned trips made by a population of
customers.
Later, Berman and collaborators investigate a variety of versions of
this problem: (i) the optimal location for discretionary
facilities~\cite{BBL95}, (ii) facility location given probabilistic
flows~\cite{BKX95}, (iii) locating facilities with finite
capacities~\cite{B95}, (iv) locating facilities when the level of
customer usage of a service depends on the number of facilities they
encounter along their path~\cite{AB96}, (v) locating competitive
facilities (demand and flow coverage problem)~\cite{B97}.

Our study differs from this existing work in important ways. We assume
a realistic setting and propose efficient means of placing a facility
on a road segment, considering existing facilities and customer
movements derived from GPS data. Our framework enables the use of
scoring functions that generate scores from customer traversals of
routes, and it enables the use of models that distribute these scores
to road segments. The framework is open to such functions and models
and thus enables the modeling of a wide variety of scenarios.  Our
approach can easily be augmented to model the unavailability of
locations in a spatial network, so that such locations are not
considered in results.

\section{Conclusions and Future Work}
\label{sec:conclusion}

The paper formalizes a modern version of the classical facility
location problem that takes into account the availability of customer
trajectory data that is constrained to a road network, rather than
simply assuming the availability of static customer locations. In the
resulting framework, route traversals by customers rather than
customer locations are attracted by facilities. The framework enables
a wide variety of choices for assigning scores to the routes traversed
by customers and for distributing these scores to segments in the
underlying road network, thus offering flexibility that aims to enable
applications with different types of facilities. We believe that this
work provides a new and realistic generalization of the classical
facility location problem.

Two algorithms, AUG and ITE, are provided to solve this generalized
problem. AUG takes a graph augmentation approach, and ITE iteratively
partitions road segments into smaller pieces (subsegments) while using
a scoring mechanism to guide the selection of promising segments for
further partitioning.
The paper reports on empirical studies with both real and synthetic
routes map-matched to a real spatial network that demonstrate
practicality of the proposed algorithms.  Algorithm ITE outperforms
AUG thanks to its sophisticated pruning techniques hat effectively
reduce the search space.

Several interesting directions for future work exist, including the
following two. First, the optimal segments can be incrementally
evaluated when new routes are available. Incremental evaluation allows
more flexibility when new routes are continuously added and may help
improve the performance. Second, future work may consider finding
top-$k$ segments.

\bibliographystyle{abbrv}

\begin{thebibliography}{10}

\bibitem{ABK09}
R.~Aboolian, O.~Berman, and D.~Krass.
\newblock Efficient solution approaches for a discrete multi-facility
  competitive interaction model.
\newblock {\em Ann. Oper. Res.}, 167(1):297--306, 2009.

\bibitem{AB96}
I.~Averbakh and O.~Berman.
\newblock Locating flow-capturing units on a network with multi-counting and
  diminishing returns to scale.
\newblock {\em EJOR}, 99(3):495--506, 1996.

\bibitem{B95}
O.~Berman.
\newblock The maximizing market size discretionary facility location problem
  with congestion.
\newblock {\em Socio-Economic Planning Sciences}, 29(1):39--46, 1995.

\bibitem{B97}
O.~Berman.
\newblock Deterministic flow-demand location problems.
\newblock {\em J. Oper. Res. Soc.}, 43(4):623--632, 1997.

\bibitem{BBL95}
O.~Berman, D.~J. Bertsimas, and R.~C. Larson.
\newblock Locating discretionary service facilities, II: Maximizing market
  size, minimizing inconvenience.
\newblock {\em OR}, 43(4):623--632, 1995.

\bibitem{BKX95}
O.~Berman, D.~Krass, and C.~W. Xu.
\newblock Locating flow-intercepting facilities: New approaches and results.
\newblock {\em Ann. Oper. Res.}, 60(1):121--143, 1995.

\bibitem{CDL05}
S.~Cabello, J.~M. D\'{\i}az-B{\'a}{\~n}ez, S.~Langerman, C.~Seara, and
  I.~Ventura.
\newblock Reverse facility location problems.
\newblock In {\em CCCG}, pages 68--71, 2005.

\bibitem{CDL10}
S.~Cabello, J.~M. D\'{\i}az-B{\'a}{\~n}ez, S.~Langerman, C.~Seara, and
  I.~Ventura.
\newblock Facility location problems in the plane based on reverse nearest
  neighbor queries.
\newblock {\em EJOR}, 202(1):99--106, 2010.

\bibitem{DZX05}
Y.~Du, D.~Zhang, and T.~Xia.
\newblock The optimal-location query.
\newblock In {\em SSTD}, pages 163--180, 2005.

\bibitem{FH09}
R.~Z. Farahani and M.~Hekmatfar.
\newblock {\em Facility location: concepts, models, algorithms and case
  studies}.
\newblock Contributions to Management Science. Physica-Verlag, 2009.

\bibitem{FGP93}
P.~Flajolet, G.~H. Gonnet, C.~Puech, and J.~M. Robson.
\newblock {Analytic variations on quadtrees}.
\newblock {\em Algorithmica}, 10(6):473--500, 1993.

\bibitem{F06}
D.~Fotakis.
\newblock Incremental algorithms for facility location and $k$-median.
\newblock {\em Theor. Comput. Sci.}, 361(2-3):275--313, 2006.

\bibitem{H81}
M.~J. Hodgson.
\newblock The location of public facilities intermediate to the journey to
  work.
\newblock {\em EJOR}, 6(2):199--204, 1981.

\bibitem{H90}
M.~J. Hodgson.
\newblock A flow capturing location-allocation model.
\newblock {\em Geographical Analysis}, 22(3):270--279, 1990.

\bibitem{JOD07b}
H.~Jia, F.~Ordóñez, and M.~M. Dessouky.
\newblock A modeling framework for facility location of medical services for
  large-scale emergencies.
\newblock {\em IIE Transactions}, 39:41--55, 2007.

\bibitem{JOD07a}
H.~Jia, F.~Ordóñez, and M.~M. Dessouky.
\newblock Solution approaches for facility location of medical supplies for
  large-scale emergencies.
\newblock {\em Computers and Industrial Engineering}, 52:257--276, 2007.

\bibitem{LAT07}
H.~Lahrmann, N.~Agerholm, N.~Tradi\v{s}auskas, and J.~Juhl.
\newblock Spar paa farten-an intelligent speed adaptation project in Denmark
  based on pay as you drive principles.
\newblock {\em European Congress on ITS and Services}, 2007.

\bibitem{M01}
A.~Meyerson.
\newblock Online facility location.
\newblock In {\em FOCS}, pages 426--431, 2001.

\bibitem{NP05}
S.~Nickel and J.~Puerto.
\newblock {\em Location theory: An unified approach}.
\newblock Springer, 2005.

\bibitem{pohl1971}
I.~Pohl.
\newblock Bi-directional search.
\newblock {\em Machine Intelligence}, 6:127--140, 1971.

\bibitem{TJL07}
N.~Tradi\v{s}auskas, J.~Juhl, H.~Lahrmann, and C.~S. Jensen.
\newblock Map matching for intelligent speed adaptation.
\newblock In {\em European Congress on ITS and Services}, 2007.

\bibitem{WOY09}
R.~C.-W. Wong, M.~T. {\"O}zsu, P.~S. Yu, A.~W.-C. Fu, and L.~Liu.
\newblock Efficient method for maximizing bichromatic reverse nearest neighbor.
\newblock {\em PVLDB}, 2(1):1126--1137, 2009.

\bibitem{XYL10}
X.~Xiao, B.~Yao, and F.~Li.
\newblock Optimal location queries in road network databases.
\newblock In {\em ICDE}, pages 804--815, 2011.

\bibitem{ZDX06}
D.~Zhang, Y.~Du, T.~Xia, and Y.~Tao.
\newblock Progressive computation of the min-dist optimal-location query.
\newblock In {\em VLDB}, pages 643--654, 2006.

\bibitem{ZWL10}
Z.~Zhou, W.~Wu, X.~Li, M.-L. Lee, and W.~Hsu.
\newblock MaxFirst for MaxBRkNN.
\newblock In {\em ICDE}, pages 828--839, 2011.

\end{thebibliography}

\end{document}